\documentclass{article}
\sloppy
\usepackage{myijcai17}
\usepackage{stfloats}

\usepackage{times}
\usepackage{microtype}

\usepackage{amsthm}
\usepackage{amsmath, amsfonts, amssymb, stmaryrd, bbm, bussproofs,xspace}
\usepackage[matrix,arrow,curve]{xy}
\usepackage{textcomp}
\pagestyle{plain}
\usepackage{url}
\usepackage{hyperref}
\usepackage[author=anonymous,nomargin,marginclue,footnote,status=draft]{fixme}
\FXRegisterAuthor{ls}{als}{LS}
\FXRegisterAuthor{pw}{apw}{PW}

\newcommand{\domain}{\mathit{domain}}
\newcommand{\world}{\mathit{world}}

\newcommand{\ef}{\cong}
\newcommand{\bisim}{\approx}
\newcommand{\ST}{\mathsf{ST}}
\newcommand{\shortiff}{\Leftrightarrow}

\newcommand{\Land}{\bigwedge}

\newcommand{\CI}{\mathcal{I}}
\newcommand{\CJ}{\mathcal{J}}

\renewcommand{\theta}{\vartheta}

\newcommand{\CN}{\mathsf{N}_{\mathsf{C}}}

\newcommand{\RN}{{\mathsf{N}_{\mathsf{R}}}}
\newcommand{\ALC}{\mathcal{ALC}}
\newcommand{\EL}{\mathcal{EL}}

\makeatletter
\newcommand*{\@old@slash}{}\let\@old@slash\slash
\def\slash{\relax\ifmmode\delimiter"502F30E\mathopen{}\else\@old@slash\fi}
\makeatother


\newlength{\myboxwidth}
\setlength{\myboxwidth}{\textwidth}
\addtolength{\myboxwidth}{-20pt}
	{\end{center}\end{figure*}}

\setlength{\itemsep}{0cm}
\newcounter{blubber}

\newenvironment{myenumerate}
{\begin{enumerate}
\setlength{\itemsep}{0pt}
    \setlength{\leftmargin}{0pt}
    \setlength{\itemindent}{0pt}
}
{\end{enumerate}}

\newenvironment{myitemize}
{\begin{itemize}
\setlength{\itemsep}{0.1ex}
\setlength{\parsep}{0cm}
}
{\end{itemize}}

\hyphenation{Pres-bur-ger Lu-gi-ez}
\title{A Characterization Theorem for a Modal Description Logic}
\author{Paul Wild \and Lutz Schr\"oder\thanks{Work forms part of DFG project \emph{ProbDL2} (SCHR 1118/6-2)}\\
Friedrich-Alexander-Universit\"{a}t Erlangen-N\"{u}rnberg}

\theoremstyle{definition}
       \newtheorem{defn}{Definition}[section]
       \newtheorem{expl}[defn]{Example}

       \newtheorem{rem}[defn]{Remark}
\theoremstyle{plain}
       \newtheorem{thm}[defn]{Theorem}
       
       \newtheorem{lem}[defn]{Lemma}

       \newtheorem{cor}[defn]{Corollary}

\newcommand{\SfALC}{S5_{\ALC}}
\newcommand{\SfALCloc}{S5^{\mathit{loc}}_{\ALC}}
\newcommand{\SfFOL}{S5_{\mathit{FOL}}}

\newcommand{\nf}[2]{C^{#1}_{#2}}
\newcommand{\abisim}{\approx^{\mathsf{alt}}}
\newcommand{\rk}{\textnormal{rk}}

\begin{document}

\maketitle

\begin{abstract} Modal description logics feature modalities that capture dependence of knowledge on parameters such as time, place, or the information state of agents. E.g., the logic $\SfALC$ combines the standard description logic $\ALC$ with an $S5$-modality that can be understood as an epistemic operator or as representing (undirected) change. This logic embeds into a corresponding modal first-order logic $\SfFOL$. We prove a modal characterization theorem for this embedding, in analogy to results by van Benthem and Rosen relating $\ALC$ to standard first-order logic: We show that $\SfALC$ with only local roles is, both over finite and over unrestricted models, precisely the bisimulation invariant fragment of $\SfFOL$, thus giving an exact description of the expressive power of $\SfALC$ with only local roles.
\end{abstract}

\section{Introduction}

Modal description logics extend the static knowledge model of standard
description logics by adding modalities capturing, e.g., the temporal
evolution of the state of the world or the dependence of knowledge on
the information available to individual agents. Their semantics is
typically \emph{two-dimensional}~\cite{GabbayEA02}, i.e.\ it is
defined over interpretations involving two sets of \emph{individuals}
and \emph{worlds}, respectively, and concepts are interpreted as
subsets of the Cartesian product of these two sets. For instance,
temporal description logics (surveyed, e.g., by Lutz et
al.~\shortcite{LutzEA08}) have a frame structure on the set of worlds, in
the same way as in the semantics of standard temporal logics such as
CTL; they support statements such as `every person that is currently a
child will eventually become an adult in the future'.

A simpler variant of the same idea is to give up directedness of
temporal evolution and instead introduce a modality that reads `at
some other point in time', so that, continuing the previous example,
one can express only that that every person that is currently a child
is an adult at some other time. This coarser granularity buys a
simplification of the semantics in which the set of worlds is just a
set (equivalently, a frame whose transition relation is an
equivalence), i.e.\ a model of the modal logic~$S5$. Modal description
logics with an $S5$ modality have been used prominently as
\emph{description logics of change}, and are able to encode a
restricted form of temporal entity-relationship models if the
description logic is strong enough (specifically, contains
$\mathcal{ALCQI}$)~\cite{ArtaleEA07}.

One of the simplest description logics of change in this sense is
$\SfALC$, i.e.\ the extension of the standard description logic
$\ALC$~\cite{BaaderEA03} with an $S5$ change modality. In fact, there
are many other readings for the $S5$ modality. In particular, $S5$
modalities standardly feature in epistemic logics, and indeed $\SfALC$
was originally introduced as an epistemic description
logic~\cite{WolterZakharyaschev99}. As a variant of this view,
$\SfALC$ and its $\EL$ fragment have been considered as a corner case
of probabilistic description logics for subjective uncertainty, with
probabilities mentioned in concepts restricted to $0$
or~$1$~\cite{Gutierrez-BasultoEA17}. In the current work, we focus on
$\SfALC$ as one of the most basic modal description logics, and use it
as a starting point for the \emph{correspondence theory} of modal
description logics.

Specifically, $\SfALC$ embeds as a fragment into the modal first-order
logic $\SfFOL$, which extends standard first-order logic with an $S5$
modality and lives over the same type of semantic structures as
$\SfALC$. This situation is analogous to the one with $\ALC$ itself,
which embeds as a fragment into ordinary first-order logic (FOL). For
$\ALC$, it is straightforward to check that its concepts are
\emph{bisimulation invariant}, i.e.\ bisimilar individuals satisfy the
same $\ALC$-concepts. This constitutes in effect an upper bound on the
expressivity of $\ALC$: any property that fails to be bisimulation
invariant (such as `individual $x$ is related to itself under role
$r$') is not expressible in $\ALC$. Remarkably, it can be shown that
this is also a lower bound: \emph{every} bisimulation invariant
\emph{first-order} property can be expressed in $\ALC$, a fact first
proved by van Benthem~\shortcite{BenthemThesis} and later shown to
hold true also over finite structures by Rosen~\shortcite{Rosen97}. In
other words, $\ALC$ is precisely the bisimulation invariant fragment
of FOL; we refer to theorems of this type as \emph{modal
  characterization theorems}.  In this terminology, the object of this
paper is to establish a modal characterization theorem for
$\SfALCloc$, the fragment of $\SfALC$ determined by admitting only
local (i.e.\ non-modalized) roles: We show that both over unrestricted
and over finite interpretations, \emph{$\SfALCloc$ is precisely the
  bisimulation invariant fragment of $\SfFOL$}, where both
bisimulation invariance and equivalence to a modal formula are
understood over two-dimensional interpretations. Technically, we
follow a generic recipe suggested by Otto~\shortcite{Otto04}, which
relies on \emph{locality} w.r.t.\ Gaifman distance. In fact, the main
challenge in our proof is to identify a suitable notion of Gaifman
distance for $\SfFOL$, and relate it to numbers of rounds played in
bisimulation games (Remark~\ref{rem:challenge}). Summing up, we pin
down the exact expressivity of $\SfALCloc$ as a fragment of $\SfFOL$;
to our best knowledge, this is the first time a modal characterization
theorem is obtained for a many-dimensional modal logic or a modal
description logic.\medskip

\noindent This paper is a full version of a shorter conference
paper~\cite{WildSchroder17-ijcai}. In the main text, proofs are sometimes
omitted or only sketched; full proofs are in the appendix.

\paragraph{Related Work} In the one-dimensional case, the original van
Benthem / Rosen characterization theorem has been extended in various
directions, e.g.\ for logics with frame conditions~\cite{DawarOtto05},
coalgebraic modal logics~\cite{SchroderEA15}, fragments of
XPath~\cite{tenCateEA10,FigueiraEA15,AbriolaEA17}, neighbourhood
logic~\cite{HansenEA09}, modal and first order logic with team
semantics~\cite{KontinenEA15}, modal $\mu$-calculi (within monadic
second order logics)~\cite{JaninWalukiewicz95,EnqvistEA15}, and PDL
(within weak chain logic)~\cite{Carreiro15}.

In the many-dimensional setting, all existing characterization results
that we are aware of look in the other direction, from the perspective
of modal first-order logics: they characterize modal first-order
logics as fragments of even more expressive two-sorted first order
logics that make the worlds explicit. Specifically, van
Benthem~\shortcite{Benthem01} proves this for unrestricted frames in
the modal dimension, i.e., in the nomenclature scheme we use here, for
$K_{\mathit{FOL}}$, while Sturm and Wolter~\shortcite{SturmWolter01}
characterize $\SfFOL$ as as a fragment of two-sorted FOL; in both
cases, the relevant notion of equivalence is essentially bisimilarity
in the modal dimension, and Ehrenfeucht-Fra\"iss\'e equivalence in the
first-order dimension. Both results are proved only over unrestricted
models, and their proofs rely on compactness. The characterization
theorem for $\SfFOL$ can in fact be combined with our characterization
of $\SfALC$ as a fragment of $\SfFOL$ to obtain, over unrestricted
models, a stronger characterization of $\SfALC$ as the
bisimulation invariant fragment of the two-sorted first-order
correspondence language (Corollary~\ref{cor:benthem} below).

\section{\!$S5$-modalized $\ALC$ and FOL}
\label{sec:s5alc}

\noindent We recall the syntax of modalized $\ALC$ as introduced by
Wolter and Zakharyaschev~\shortcite{WolterZakharyaschev99},
restricting to a single modality: Concepts $C,D$ of $\SfALCloc$
($S5$-\emph{modalized $\ALC$ with only local roles}) are given by the
grammar
\begin{equation*}
  C,D::= A \mid \neg C \mid C\sqcap D\mid \exists r.\,C\mid \Box C
\end{equation*}
where as usual $A$ ranges over a set $\CN$ of \emph{(atomic) concept
  names} and $r$ over a set $\RN$ of \emph{role names}. The remaining
Boolean connectives $\top,\bot,\sqcup$, as well as universal
restrictions $\forall r.\,C$, are encoded as usual. The \emph{rank} of
an $\SfALCloc$-concept~$C$ is the maximal nesting depth of modalities
$\Box$ and existential restrictions $\exists r$ in $C$ (e.g.\
$\exists r.\,\Box A$ has rank~$2$).

An \emph{$S5$-interpretation}
\begin{equation*}
\CI=(W^\CI,\Delta^\CI,((-)^{\CI,w})_{w\in W^\CI})
\end{equation*}
consists of nonempty sets $W^\CI$, $\Delta^\CI$ of of \emph{worlds}
and \emph{individuals}, respectively, and for each world $w\in W^\CI$
a standard $\ALC$ interpretation $(-)^{\CI,w}$ over $\Delta^\CI$,
i.e.\ for each concept name $A\in\CN$ a subset
$A^{\CI,w}\subseteq\Delta^\CI$, and for each role name $r\in\RN$ a
binary relation $r^{\CI,w}\subseteq\Delta^{\CI}\times\Delta^{\CI}$. We
refer to $\Delta^\CI$ as the \emph{domain} of $\CI$. The
interpretation $C^{\CI,w}\subseteq\Delta^\CI$ of a composite concept
$C$ at a world $w$ is then defined recursively by the usual clauses
for the $\ALC$ constructs
($(\neg C)^{\CI,w}=\Delta^\CI\setminus C^{\CI,w}$,
$(C\sqcap D)^{\CI,w}=C^{\CI,w}\cap D^{\CI,w}$,
$(\exists r.\,C)^{\CI,w}=\{d\in\Delta^\CI\mid\exists e\in
C^{\CI,w}.\,(d,e)\in r^{\CI,w}\}$), and by
\begin{equation*}
  (\Box C)^{\CI,w}=\{d\in\Delta^\CI\mid\forall v\in W^\CI.\,d\in C^{\CI,v}\}.
\end{equation*}
In words, $\Box C$ denotes the set of individuals that belong to~$C$
in all worlds. As usual, we write $\Diamond$ for the dual of $\Box$,
i.e.\ $\Diamond C$ abbreviates $\neg\Box\neg C$ and denotes the set of
all individuals that belong to $C$ in some world. We write
$\CI,w,d\models C$ if $d\in C^{\CI,w}$.

$S5$-interpretations are \emph{two-dimensional} in the sense that
concepts are effectively interpreted as subsets of Cartesian products
$W^\CI\times\Delta^\CI$, and the modalities $\Box$ and $\exists r$ are
interpreted by relations that move only in one dimension of the
product: $\Box$ moves only in the world dimension and keeps the
individual fixed, and vice versa for~$\exists r$. 
Thus, $\SfALCloc$ and $\SfALC$
(Remark~\ref{rem:s5alc}) are examples of \emph{many-dimensional modal
  logics}~\cite{MarxVenema97,GabbayEA02}. 

As indicated in the introduction, there are various readings that can
be attached to the modality $\Box$. E.g.\ if we see $\Box$ as a
\emph{change modality}~\cite{ArtaleEA07}, and, for variety, consider
spatial rather than temporal change, then the concept
\begin{gather*}
  \exists\,\mathsf{isMarriedTo}.\,
  (\neg C\sqcap \Diamond C)\\
  \text{where } C = \exists\,\mathsf{wantedBy}.\,\mathsf{LawEnforcement}
\end{gather*}
describes persons married to fugitives from the law, i.e.\ to persons
that are wanted by the police in some place but not here. As an
example where we read $\Box$ as an epistemic modality `I know
that'~\cite{WolterZakharyaschev99,GabbayEA02}, the concept
\begin{equation*}
  \Box\exists\,\mathsf{has}.\,(\mathsf{Gun}\sqcap\mathsf{Concealed}\sqcap\Diamond\mathsf{Loaded})
\end{equation*}
applies to people who I know are armed with a concealed gun that as
far as I know might be loaded.

A more expressive modal language is \emph{$S5$-modalized first-order
  logic with constant domains}, which we briefly refer to as
$\SfFOL$. Formulas $\phi,\psi$ of $\SfFOL$ are given by the grammar
\begin{equation*}
  \phi,\psi::= R(x_1,\dots,x_n)\mid x = y\mid \neg\phi\mid\phi\sqcap\psi\mid\exists x.\,\phi\mid\Box\phi
\end{equation*}
where $x,y$ and the $x_i$ are variables from a fixed countably
infinite reservoir and $R$ is an $n$-ary predicate from an underlying
\emph{language} of predicate symbols with given arities. The
quantifier $\exists x$ binds the variable $x$, and we have the usual
notions of free and bound variables in formulas. The \emph{rank} of a
formula $\phi$ is the maximal nesting depth of modalities $\Box$ and
quantifiers $\exists x$ in~$\phi$; e.g.\
$\exists x.\,\Box(\forall y.\,r(x,y))$ has rank~$3$. This is exactly
the $S5$ modal first-order logic called $\mathcal{QML}$ by Sturm and
Wolter~\shortcite{SturmWolter01}. From now on we fix the language to
be the \emph{correspondence language} of $\SfALC$, which has a unary
predicate symbol $A$ for each concept name~$A$ and a binary predicate
symbol $r$ for each role name $r$.  The semantics of $\SfFOL$ is then
defined over $S5$-interpretations, like $\SfALCloc$. It is given in
terms of a satisfaction relation~$\models$ that relates an
interpretation $\CI$, a world $w\in W^\CI$, and a \emph{valuation}
$\eta$ assigning a value $\eta(x)\in\Delta^\CI$ to every variable $x$
on the one hand to a formula $\phi$ on the other hand.  The
relation~$\models$ is defined by the expected clauses for Boolean
connectives, and
\begin{gather*}
  \CI,w,\eta\models R(x_1,\dots,x_n)  \shortiff (\eta(x_1),\dots,\eta(x_n))\in R^{\CI,w}\\
  \CI,w,\eta\models x=y   \shortiff \eta(x)=\eta(y)\\
  \CI,w,\eta\models \exists x.\,\phi  \shortiff \CI,w,\eta[x\mapsto d]\models\phi
                                       \text{ for some $d\in\Delta^\CI$}\\
  \CI,w,\eta\models\Box\phi  \shortiff \CI,v,\eta\models\phi
                              \text{ for all $v\in W^\CI$}
\end{gather*}
(where $\eta[x\mapsto d]$ denotes the valuation that maps $x$ to $d$
and otherwise behaves like $\eta$). That is, the semantics of the
first-order constructs is as usual, and that of $\Box$ is as in
$\SfALC$. We often write valuations as vectors
$\bar d=(d_1,\dots,d_n)\in(\Delta^\CI)^n$, which list the values assigned to
variables $x_1,\dots,x_n$ if the free variables of $\phi$ are
contained in $\{x_1,\dots,x_n\}$.

To formalize the obvious fact that \emph{$\SfALCloc$ is a fragment of
  $\SfFOL$}, we extend the usual standard translation to $\SfALC$: a
translation $\ST_x$ that maps $\SfALC$-concepts $C$ to
$\SfFOL$-formulas $\ST_x(C)$ with a single free variable $x$ is given
by
\begin{align*} 
  \ST_x(A) & = A(x)\\
  \ST_x(\exists r.\,C) & =\exists y.\,(r(x,y)\sqcap\ST_y(C))\quad\text{($y$ fresh)}
\end{align*}
and commutation with all other constructs. Then $\ST_x$ preserves the
semantics, i.e.\
\begin{lem}
  For every $\SfALCloc$-concept $C$, interpretation $\CI$, $w\in W^\CI$,
  and $d\in\Delta^\CI$, we have
  \begin{equation*}
    \CI,w,d\models C\quad\text{iff}\quad \CI,w,d\models \ST_x(C).
  \end{equation*}
\end{lem}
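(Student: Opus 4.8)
The plan is a routine structural induction on the concept $C$. Since the clause for $\exists r.\,C$ in the definition of the standard translation introduces a fresh variable, it is convenient to induct on the slightly more general statement that for \emph{every} variable $y$, interpretation $\CI$, world $w\in W^\CI$, element $d\in\Delta^\CI$, and valuation $\eta$, one has $\CI,w,d\models C$ iff $\CI,w,\eta[y\mapsto d]\models\ST_y(C)$, where $\ST_y$ is the evident variant of $\ST_x$ using $y$ as the designated free variable. A preliminary observation (also by an easy induction) is that $y$ is the only free variable of $\ST_y(C)$, so the values of $\eta$ on other variables are irrelevant; the lemma is then the instance $y=x$.

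The base case $C=A$ with $A\in\CN$ is immediate: both sides unfold to $d\in A^{\CI,w}$, using $\ST_y(A)=A(y)$ and the semantics of unary predicates. The Boolean cases $C=\neg D$ and $C=D\sqcap E$ follow directly from the induction hypothesis together with the facts that $\ST_y$ commutes with $\neg$ and $\sqcap$ and that the semantics of these connectives agrees on the two sides. The modal case $C=\Box D$ is equally direct: $\CI,w,d\models\Box D$ means $\CI,v,d\models D$ for all $v\in W^\CI$, which by the induction hypothesis is equivalent to $\CI,v,\eta[y\mapsto d]\models\ST_y(D)$ for all $v$, i.e.\ to $\CI,w,\eta[y\mapsto d]\models\Box\,\ST_y(D)=\ST_y(\Box D)$; note that the valuation stays put as the world varies.

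The one case that needs a little care is $C=\exists r.\,D$, with $\ST_y(\exists r.\,D)=\exists z.\,(r(y,z)\sqcap\ST_z(D))$ for a fresh variable $z\ne y$. Here $\CI,w,d\models\exists r.\,D$ holds iff there is $e\in\Delta^\CI$ with $(d,e)\in r^{\CI,w}$ and $\CI,w,e\models D$; by the induction hypothesis applied with the variable $z$, the latter is equivalent to $\CI,w,\eta[y\mapsto d][z\mapsto e]\models\ST_z(D)$, and the same valuation makes $r(y,z)$ true, so it witnesses the translated existential; conversely any witness $e$ for $\exists z.\,(r(y,z)\sqcap\ST_z(D))$ yields a witness for $\exists r.\,D$. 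Freshness of $z$ ensures that $\ST_z(D)$ places no constraint on the value of $y$, so the two assignments do not interfere. This bookkeeping with fresh variables is the only mild obstacle in the proof; everything else is routine.
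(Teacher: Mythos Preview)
Your proof is correct: the structural induction with the mild generalization to an arbitrary designated variable is exactly the standard argument, and your handling of the fresh variable in the $\exists r$ case is fine. The paper itself does not give a proof of this lemma at all (neither in the main text nor in the appendix), treating it as immediate from the definitions, so there is nothing further to compare.
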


\begin{rem}\label{rem:s5alc}
  Modalized $\ALC$ has been extended with \emph{modalized
    roles}~\cite{WolterZakharyaschev99b}, i.e.\ roles of the form
  $\Box r$ or $\Diamond r$, interpreted as
  \begin{align*}
    (\Box r)^{\CI,w}&=\{(d,e)\mid \forall w\in W^\CI.\,(d,e)\in r^{\CI,w}\}\\
    (\Diamond r)^{\CI,w}&=\{(d,e)\mid \exists w\in W^\CI.\,(d,e)\in r^{\CI,w}\}.
  \end{align*}
  The $S5$-modalized description logic in this extended sense has been
  termed $\SfALC$ by Gabbay et al.~\shortcite{GabbayEA02}; so in our
  notation $\SfALCloc$ is the fragment of $\SfALC$ without modalized
  roles. Since modalized roles $\Box r$, $\Diamond r$ have an
  interpretation that is independent of the world while that of basic
  roles $r$ varies between worlds, the latter are called \emph{local}
  roles, explaining the slightly verbose terminology used above. We
  will see that $\SfALC$ fails to be bisimulation invariant, and is
  therefore strictly more expressive than $\SfALCloc$.
\end{rem}

\section{Bisimulation and Invariance}

\noindent We proceed to introduce the relevant notion of bisimulation
for $S5$-interpretations. This is just the usual notion of
bisimilarity, specialized to the two-dimensional shape of
$S5$-interpretations and the $S5$ structure of the world
dimension; explicitly:
\begin{defn}[Bisimulation]
  A \emph{bisimulation} between interpretations $\CI$, $\CJ$ is a relation
  \begin{equation*}
    R \subseteq (W^\CI\times\Delta^\CI)\times(W^\CJ\times\Delta^\CJ)
  \end{equation*}
  such that whenever $(w,d) R (v,e)$, then
  \begin{enumerate}
  \item $d \in A^{\CI,w}$ iff $e \in A^{\CJ,v}$ for all $A \in \CN$;
  \item for every $w'\in W^\CI$ there is $v'\in W^\CJ$ such that
    $(w',d)R(v',e)$;
  \item Same with the roles of $\CI$ and $\CJ$ interchanged.
  \item for every $(d,d') \in r^{\CI,w}$ ($r\in\RN$) there is $e'$
    such that $(e,e') \in r^{\CJ,v}$ and $(w, d') R (v,e')$
  \item Same with the roles of $\CI$ and $\CJ$ interchanged.
  \end{enumerate}
  We say that $\CI,w,d$ and $\CJ,v,e$ are \emph{bisimilar}, and
  write
  \begin{equation*}
    \CI,w,d\bisim \CJ,v,e
  \end{equation*}
  if there exists a bisimulation $R$ such that $(w,d)R(v,e)$.
\end{defn}
\noindent We record explicitly that $\SfALCloc$ is \emph{bisimulation
  invariant}, a fact that is immediate from bisimulation invariance
of basic multi-modal logic (over all interpretations, including
$S5$-interpretations). As a general manner of speaking, whenever~$P$
is any property that applies to triples $\CI,w,d$ consisting of an
$S5$-interpretation $\CI$, $w\in W^\CI$, and $d\in\Delta^\CI$ (e.g.\
$P$ could be an $\SfALCloc$-concept or an $\SfFOL$-formula with one
free variable), then we say that~$P$ is \emph{bisimulation invariant},
or just \emph{$\bisim$-invariant}, if whenever $\CI,w,d\bisim\CJ,v,e$
then $\CI,w,d$ has property $P$ iff $\CJ,v,e$ has property $P$.  We
will extend this terminology without further comment to other notions
of equivalence that we introduce later, such as bisimilarity up to
finite depth and Ehrenfeucht-Fra\"iss\'e equivalence. Moreover, we
will consider restrictions of these notions to finite
$S5$-interpretations; e.g.\ \emph{bisimulation-invariance over finite
  $S5$-interpretations} of a property~$P$ is defined like
bisimulation-invariance of~$P$ above but with $\CI$ and $\CJ$ assumed
to be finite. In these terms, we have
\begin{lem}[Bisimulation invariance]\label{lem:bisim-invar}
  Every $\SfALCloc$-concept is $\bisim$-invariant.
\end{lem}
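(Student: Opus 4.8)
The plan is to prove a slightly stronger statement by structural induction on concepts: for every bisimulation $R$ between $S5$-interpretations $\CI$ and $\CJ$, every pair with $(w,d)\,R\,(v,e)$, and every $\SfALCloc$-concept $C$, we have $\CI,w,d\models C$ iff $\CJ,v,e\models C$. The lemma then follows at once by unravelling the definition of $\bisim$ (applied in particular to an interpretation and itself).

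First I would dispatch the base and Boolean cases. For $C=A$ a concept name, the claim is exactly clause~1 of the definition of bisimulation. For $C=\neg D$ and $C=D_1\sqcap D_2$, the semantics at the fixed world-individual pair $(w,d)$ (resp.\ $(v,e)$) reduces to the semantics of the immediate subconcepts at the very same pair, so the induction hypothesis applies directly to $(w,d)\,R\,(v,e)$ and yields the result.

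Next come the two modal cases, each handled by the corresponding back-and-forth pair of clauses. For $C=\Box D$: assuming $\CI,w,d\models\Box D$, we have $\CI,w',d\models D$ for all $w'\in W^\CI$; given any $v'\in W^\CJ$, clause~3 produces $w'\in W^\CI$ with $(w',d)\,R\,(v',e)$, hence $\CJ,v',e\models D$ by the induction hypothesis, so $\CJ,v,e\models\Box D$; the converse direction uses clause~2 symmetrically. For $C=\exists r.\,D$: assuming $\CI,w,d\models\exists r.\,D$, pick $d'$ with $(d,d')\in r^{\CI,w}$ and $\CI,w,d'\models D$; clause~4 yields $e'$ with $(e,e')\in r^{\CJ,v}$ and $(w,d')\,R\,(v,e')$, hence $\CJ,v,e'\models D$ by the induction hypothesis, so $\CJ,v,e\models\exists r.\,D$; clause~5 handles the converse.

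I do not expect any genuine obstacle here: the definition of bisimulation is tailored so that each modal construct of $\SfALCloc$ comes with a matching back-and-forth pair of clauses, and the whole statement is in fact an instance of bisimulation invariance of basic multi-modal logic --- an $S5$-interpretation is, up to notation, a Kripke model over the state set $W^\CI\times\Delta^\CI$ carrying one transition relation per role name (moving the individual component within a fixed world) together with one extra relation for the $S5$-modality (moving the world component within a fixed individual), and the notion above is precisely the standard bisimulation for such models. The one point worth flagging is that this relies on roles being \emph{local}: for modalized roles $\Box r$, $\Diamond r$ the relation moving the individual component would itself depend on other worlds, clauses~4 and~5 would no longer cover it, and, as already announced in Remark~\ref{rem:s5alc}, bisimulation invariance indeed fails for full $\SfALC$.
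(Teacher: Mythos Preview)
Your proposal is correct and matches the paper's approach: the paper does not spell out a proof but simply remarks that the lemma is ``immediate from bisimulation invariance of basic multi-modal logic (over all interpretations, including $S5$-interpretations)'', which is exactly the observation you make in your final paragraph, and your structural induction is the standard unfolding of that fact. The only superfluous bit is the parenthetical about applying $\bisim$ to an interpretation and itself; the lemma follows directly from your inductive statement by taking any bisimulation witnessing $\CI,w,d\bisim\CJ,v,e$.
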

\begin{expl}\label{expl:modalized-roles}
  As indicated in the introduction, bisimulation invariance is an
  upper bound on the expressivity of $\SfALC$. As an extremely simple
  example, the formula $r(x,x)$ of $\SfFOL$ fails to be
  $\bisim$-invariant and is therefore, by Lemma~\ref{lem:bisim-invar},
  not equivalent to (the standard translation of) any
  $\SfALCloc$-concept. Bisimulation invariance also separates
  $\SfALCloc$ from $\SfALC$ (Remark~\ref{rem:s5alc}): the
  $\SfALC$-concept $\exists\Diamond r.\, A$ fails to be
  $\bisim$-invariant and is therefore not expressible in $\SfALCloc$.
\end{expl}


\paragraph*{Bisimulation games} As usual, bisimilarity can
equivalently be captured in terms of games. Explicitly:
\begin{defn}[Bisimulation game]\label{def:bisim-game}
  Let $\CI$, $\CJ$ be $S5$-interpretations, and let
  $(w_0,d_0)\in W^\CI\times\Delta^\CI$,
  $(v_0,e_0)\in W^\CJ\times\Delta^\CJ$.  The \emph{bisimulation game for
    $\CI,w_0,d_0$ and $\CJ,v_0,e_0$} is played by players $S$
  (\emph{Spoiler}) and~$D$ (\emph{Duplicator}), where~$D$ means to
  establish bisimilarity and~$S$ aims to disprove it. A
  \emph{configuration} of the game is a quadruple
  $((w,d),(v,e))\in
  (W^\CI\times\Delta^\CI)\times(W^\CJ\times\Delta^\CJ)$,
  with $((w_0,d_0),(v_0,e_0))$ being the initial configuration. A
  \emph{round} consists of one move by $S$ and a subsequent move by
  $D$, with the following alternatives in the current configuration
  $((w,d),(v,e))$:
  \begin{enumerate}
  \item\label{item:move-w-i} $S$ may pick a world $w'\in W^\CI$, and
    $D$ then needs to pick a world $v'\in W^\CJ$; the new
    configuration then is $((w',d),(v',e))$.
  \item \label{item:move-w-j} Same with the roles of $\CI$ and $\CJ$
    interchanged.
  \item \label{item:move-delta-i} $S$ may pick a role $r\in\RN$ and an
    individual $d'\in\Delta^\CI$ such that $(d,d')\in r^{\CI,w}$. Then
    $D$ needs to pick an individual $e'\in\Delta^\CJ$ such that
    $(e,e')\in r^{\CJ,v}$; the new configuration reached is
    $((w,d'),(v,e'))$.
  \item \label{item:move-delta-j} Same with the roles of $\CI$ and
    $\CJ$ interchanged.
  \end{enumerate}
  We will call the first two kinds of moves \emph{$W$-moves} and the
  other two kinds \emph{$\Delta$-moves}.  If one of the players cannot
  move, then the other one wins. A configuration $((w,d),(v,e))$ is
  \emph{winning for $S$} if $d\in A^{\CI,w}$ and $e\notin A^{\CJ,v}$
  for some concept name $A\in\CN$, or vice versa; and $S$ wins if a
  winning configuration for $S$ is reached. Infinite plays that do no
  visit a winning configuration for $S$ are won by $D$.
\end{defn}
\noindent The following is then standard:
\begin{lem}
  We have $\CI,w,d\bisim \CJ,v,e$ iff $D$ wins the bisimulation game
  for $\CI,w,d$ and $\CJ,v,e$.
\end{lem}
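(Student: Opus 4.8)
The plan is to prove the two implications separately, in each case translating between a bisimulation relation and a winning strategy for $D$.

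For the ``only if'' direction, suppose $\CI,w,d\bisim\CJ,v,e$, witnessed by a bisimulation $R$ with $(w,d)\mathrel{R}(v,e)$. I would describe a strategy for $D$ that maintains as an invariant that the current configuration $((w',d'),(v',e'))$ satisfies $(w',d')\mathrel{R}(v',e')$. Initially this holds. For a preservation step, observe that the four move types of Definition~\ref{def:bisim-game} correspond exactly to the back-and-forth clauses 2--5 of the bisimulation definition: e.g.\ if $S$ plays a $W$-move picking $w''\in W^\CI$, then clause~2 applied to $(w',d')\mathrel{R}(v',e')$ yields $v''\in W^\CJ$ with $(w'',d')\mathrel{R}(v'',e')$, which $D$ picks; the $\Delta$-moves are handled symmetrically using clauses 4--5. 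In particular $D$ can always respond, so $D$ never gets stuck. Finally, clause~1 guarantees that no configuration along such a play is winning for $S$. Hence every finite play is either still ongoing or lost by $S$ (because $S$ is stuck), and every infinite play avoids $S$-winning configurations; in all cases $D$ wins.

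For the ``if'' direction, suppose $D$ wins the bisimulation game for $\CI,w,d$ and $\CJ,v,e$, i.e.\ $D$ has a winning strategy from the initial configuration. Let $R$ be the set of all configurations $((w',d'),(v',e'))$ from which $D$ has a winning strategy in the game started at that configuration; then $(w,d)\mathrel{R}(v,e)$. I claim $R$ is a bisimulation. Clause~1 holds because a configuration from which $D$ wins cannot be winning for $S$. For clause~2, given $(w',d')\mathrel{R}(v',e')$ and $w''\in W^\CI$, consider the round in which $S$ plays the $W$-move to $w''$: $D$'s winning strategy prescribes a reply $v''$, and since the tail of a winning strategy after one round of its own prescribed play is again winning, the resulting configuration $((w'',d'),(v'',e'))$ lies in $R$, so $(w'',d')\mathrel{R}(v'',e')$. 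Clauses 3--5 are verified in the same way using the corresponding move types; here the fact that $D$'s strategy is winning, hence in particular that $D$ is never stuck when it is $D$'s turn, is exactly what makes the required reply exist in each of the $\Delta$-clauses. Thus $R$ witnesses $\CI,w,d\bisim\CJ,v,e$.

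Since this is the textbook back-and-forth/game correspondence specialized to the present two-dimensional $S5$ setting, I expect no real obstacle; the only point requiring a little care is the bookkeeping around the two ways a play can end --- $S$ stuck (good for $D$) versus $D$ stuck (bad for $D$) --- together with the observation that $D$'s ability to respond to every $\Delta$-move of $S$ is precisely what the surjectivity-style conditions 4--5 encode, so that no appeal to determinacy of the (possibly infinite) game is needed in either direction.
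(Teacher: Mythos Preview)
Your proposal is correct and is precisely the standard back-and-forth/game correspondence argument. The paper itself gives no proof of this lemma at all: it simply introduces it with ``The following is then standard'' and moves on, so there is nothing to compare against beyond noting that what you wrote is the standard proof the paper alludes to.
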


\noindent The bisimulation game can be restricted to a finite number
of rounds, capturing bisimulation up to finite depth:

\begin{defn}[Finite-depth bisimulation]
  The \emph{$n$-round bisimulation game} for $n\ge 0$ is played in the
  same way as the bisimulation game but only for at most $n$
  rounds. The winning conditions are the same as in the bisimulation
  game except that $D$ now wins if no winning configuration for $S$
  has been reached after $n$ rounds. We say that $\CI,w,d$ and
  $\CJ,v,e$ are \emph{depth-$n$ bisimilar}, and write
  \begin{equation*}
    \CI,w,d\bisim_n\CJ,v,e,
  \end{equation*}
  if $D$ wins the $n$-round bisimulation game for $\CI,w,d$ and
  $\CJ,v,e$.
\end{defn}
\noindent Again, the following is standard:
\begin{lem}[Invariance under finite-depth bisimulation]
  Every $\SfALCloc$-concept $C$ of rank at most $n$ is
  $\bisim_n$-invariant.
\end{lem}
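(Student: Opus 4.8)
The plan is to argue by a nested induction: an outer induction on~$n$, and, for each fixed~$n$, an inner induction on the structure of the $\SfALCloc$-concept~$C$ subject to $\rk(C)\le n$. The only mild subtlety is the bookkeeping that decrements the round count precisely when a $\Box$ or an $\exists r$ is unfolded, which is why the round count has to sit on the \emph{outer} induction rather than riding along with the structural recursion. Throughout I would work with the game characterization of $\bisim_n$: whenever $\CI,w,d\bisim_n\CJ,v,e$, fix a winning strategy~$\sigma$ for~$D$ in the $n$-round game starting from the configuration $((w,d),(v,e))$; note that this configuration is then in particular not winning for~$S$, so $d\in A^{\CI,w}$ iff $e\in A^{\CJ,v}$ for every concept name $A\in\CN$.

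For the base case $n=0$, a rank-$0$ concept is a Boolean combination of concept names, and the claim follows from the observation just made together with a trivial induction over the Boolean connectives. For the inner induction with $n\ge 1$ fixed, the atomic case $C=A$ is again that same observation, and the Boolean cases $C=\neg C'$ and $C=C_1\sqcap C_2$ are immediate, since the immediate subconcepts still have rank $\le n$, so the inner induction hypothesis applies to them directly.

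The two remaining cases carry the content. For $C=\Box C'$ (so $\rk(C')\le n-1$), assume $\CI,w,d\models\Box C'$; to show $\CJ,v,e\models\Box C'$, pick an arbitrary $v'\in W^\CJ$, let~$S$ play the $W$-move picking~$v'$ in~$\CJ$ (move~\ref{item:move-w-j}), and let~$D$ answer according to~$\sigma$, reaching some configuration $((w',d),(v',e))$ from which~$D$ still wins the remaining $(n-1)$-round game; by the game characterization this means $\CI,w',d\bisim_{n-1}\CJ,v',e$. Since $\rk(C')\le n-1$, the outer induction hypothesis gives that~$C'$ is $\bisim_{n-1}$-invariant, so $\CI,w',d\models C'$ (which holds because $\CI,w,d\models\Box C'$) yields $\CJ,v',e\models C'$; as $v'$ was arbitrary, $\CJ,v,e\models\Box C'$. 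The converse direction is symmetric, using the $W$-move into~$\CI$ (move~\ref{item:move-w-i}). For $C=\exists r.\,C'$ one proceeds in the same way but with $\Delta$-moves: from $\CI,w,d\models\exists r.\,C'$ take a witness~$d'$ with $(d,d')\in r^{\CI,w}$ and $\CI,w,d'\models C'$, let~$S$ play $(r,d')$ (move~\ref{item:move-delta-i}), and follow~$\sigma$ to obtain~$e'$ with $(e,e')\in r^{\CJ,v}$ and $\CI,w,d'\bisim_{n-1}\CJ,v,e'$, whence $\CJ,v,e'\models C'$ by the outer hypothesis, so $\CJ,v,e\models\exists r.\,C'$; the converse uses move~\ref{item:move-delta-j}.

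I do not expect a genuine obstacle: the only points requiring care are fixing the order of the two inductions so that the round count is available to be decremented exactly at $\Box$ and $\exists r$, and the routine appeal to the game lemma to pass from ``$D$ wins the new position with $n-1$ rounds left'' to ``$\bisim_{n-1}$''. Routing the argument through the standard translation~$\ST_x$ would not help, since there is no finite-approximation statement on the $\SfFOL$ side to invoke, so the direct game argument is the most economical route.
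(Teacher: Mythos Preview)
The paper does not give a proof of this lemma; it is simply declared ``standard'' after the definition of finite-depth bisimilarity. Your argument is correct and is exactly the standard one: the nested induction is set up properly, the round count is decremented precisely at $\Box$ and $\exists r$ via the outer hypothesis, and the Boolean cases correctly stay at the same~$n$ via the inner structural hypothesis. There is nothing to compare against and nothing to fix.
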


\noindent For technical purposes, we shall need a normalization of the
bisimulation game based on the observation that due to the $S5$
structure of the worlds, $S$ can never gain an advantage from playing
more than one consecutive $W$-move. Formally:

\begin{defn}[Alternating bisimulation game]\label{def:alt-bisim-game}
  The \emph{alternating bisimulation game} is played like the
  bisimulation game but with a restriction on the sequence of moves:
  Each \emph{round} in the alternating bisimulation game consists of
  two \emph{phases},
  \begin{enumerate}
  \item\label{phase:w} $S$ may decide to make a $W$-move, and in this
    case~$D$ also makes a $W$-move, according to
    Item~\ref{item:move-w-i} or Item~\ref{item:move-w-j} of
    Definition~\ref{def:bisim-game}, and then
  \item\label{phase:delta} $S$ and $D$ each play exactly one
    $\Delta$-move according to Item~\ref{item:move-delta-i} or
    Item~\ref{item:move-delta-j} of Definition~\ref{def:bisim-game}
  \end{enumerate}
  (where $D$ needs to avoid winning configurations for~$S$ at all
  times). The alternating bisimulation game also comes in two variants, the
  unbounded and the $n$-round game, with the proviso that at the end
  of an $n$-round game, there may be one extra phase of
  type~\ref{phase:w} above. We write
  \begin{equation*}
    \CI,w,d\abisim\CJ,v,e\quad\text{and}\quad
    \CI,w,d\abisim_n\CJ,v,e
  \end{equation*}
  if~$D$ has a winning strategy in the alternating and in the
  $n$-round alternating bisimulation game for $\CI,w,d$ and $\CJ,v,e$,
  respectively.
\end{defn}
\noindent The unrestricted game is equivalent to the alternating game
in the following sense:
\begin{lem}\label{lem:abisim}
  For interpretations $\CI,\CJ$ and
  $(w,d) \in W^\CI \times \Delta^\CI$,
  $(v,e) \in W^\CJ \times \Delta^\CJ$:
\begin{enumerate}
\item If $\CI,w,d \bisim_{2n+1} \CJ,v,e$, then $\CI,w,d \abisim_n \CJ,v,e$.
\item If $\CI,w,d \abisim_n \CJ,v,e$, then $\CI,w,d \bisim_n \CJ,v,e$.
\item $\CI,w,d \bisim \CJ,v,e$ iff $\CI,w,d \abisim \CJ,v,e$.
\end{enumerate}
\end{lem}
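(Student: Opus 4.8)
The plan is to treat the three statements as two easy \emph{restriction} arguments together with one \emph{simulation} argument. For the forward direction of (3) and for (1), note that the alternating bisimulation game is literally the plain bisimulation game with an extra constraint on $S$: within a round, $S$ must first (optionally) make a single $W$-move and then (compulsorily) a single $\Delta$-move, plus possibly one trailing $W$-move at the very end. Hence every play of the $n$-round alternating game is a play of the plain game lasting at most $2n+1$ rounds, and every play of the unbounded alternating game is a play of the unbounded plain game. Consequently a winning strategy for $D$ in the $(2n+1)$-round plain game restricts to a winning strategy in the $n$-round alternating game, and a winning strategy in the unbounded plain game restricts to one in the unbounded alternating game; this gives (1) and the implication $\bisim\Rightarrow\abisim$ of (3).

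For (2) and for $\abisim\Rightarrow\bisim$ of (3) I would go the other way and let $D$ simulate a fixed winning alternating strategy $\sigma$ while playing the plain game. The point is the $S5$ structure of the world dimension: from any configuration, the configurations reachable by a $W$-move are precisely those obtained by replacing the two worlds arbitrarily (the individuals are untouched), so a run of several consecutive $W$-moves by $S$ reaches nothing that a single $W$-move could not. Concretely, $D$ maintains a simulated alternating play $\pi$ in which each maximal run of consecutive $W$-moves of $S$ is collapsed to the single phase-$1$ $W$-move given by its last world. When $S$ makes a $\Delta$-move, $D$ plays $\sigma$'s response, regarding it as the phase-$2$ move of the current round of $\pi$ (whose phase-$1$ move is the last $W$-move of the preceding run, or is skipped if there was none); the round of $\pi$ is then complete. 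When $S$ makes a $W$-move directly following another $W$-move, $D$ \emph{redoes} the phase-$1$ of the current, still-incomplete round of $\pi$: it discards its previous phase-$1$ answer and consults $\sigma$ afresh for $S$'s new phase-$1$ world. This is legitimate because the prefix of $\pi$ up to the last completed round is a valid alternating play after which it is $S$'s turn to start a phase-$1$, so $\sigma$ is defined on it and, being a winning strategy, never returns an answer producing a configuration winning for $S$. An easy induction then shows that the configuration of the real plain game always coincides with the current configuration of $\pi$, and since $D$ follows $\sigma$ in $\pi$, no configuration winning for $S$ is ever reached.

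It remains to count rounds. If $S$ makes $k$ many $\Delta$-moves, then $\pi$ uses $k$ rounds, followed by at most one extra trailing phase-$1$ $W$-move coming from a final run of $W$-moves by $S$; the $n$-round alternating game permits exactly such a trailing phase, and in the $n$-round plain game $k\le n$, which gives (2). In the unbounded case a final infinite run of $W$-moves is harmless, since $\sigma$'s phase-$1$ answers never yield a configuration winning for $S$; this gives $\abisim\Rightarrow\bisim$ of (3). (Note that (3) cannot be read off directly from (1) and (2), since $\bisim$ is in general strictly finer than $\bigcap_n\bisim_n$; the unbounded simulation above is what is needed.) I expect the handling of consecutive $W$-moves — in particular verifying that $\sigma$ may safely be ``restarted'' within an unfinished phase $1$ — to be the only real subtlety, with everything else being bookkeeping.
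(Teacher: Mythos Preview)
Your proposal is correct and follows essentially the same approach as the paper: items (1) and the forward direction of (3) are handled by the straightforward restriction argument, and for (2) and the backward direction of (3) the paper, like you, lets $D$ collapse a run of consecutive $W$-moves to a single phase-1 move by re-applying the alternating strategy at the pre-phase-1 configuration (the paper phrases this via fixed response functions $f,g$ rather than your ``redo'' language, but the content is identical). Your explicit remark that (3) does not follow from (1) and (2) alone, and your bookkeeping via the simulated play $\pi$, are a bit more detailed than the paper's version but add no new ideas.
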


\section{The Modal Characterization Theorem}\label{sec:main}

\noindent We proceed to state our main result and sketch its proof:
\emph{$\SfALCloc$ is the bisimulation-invariant fragment of $\SfFOL$},
both over finite and over unrestricted $S5$-interpretations. Formally,

\begin{thm}[Modal characterization]\label{thm:rosen}
  Let $\phi = \phi(x)$ be an $\SfFOL$-formula with one free variable
  $x$. If $\phi$ is $\bisim$-invariant (over finite
  $S5$-interpretations), then there exists an $\SfALCloc$-concept $C$
  such that $\phi$ is logically equivalent to $\ST_x(C)$ (over finite
  $S5$-interpretations). Moreover, the rank of $C$ is
  exponentially bounded in the rank of $\phi$. 
\end{thm}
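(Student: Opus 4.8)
The plan is to follow Otto's~\shortcite{Otto04} generic route to van~Benthem/Rosen-style theorems, adapted to the two-dimensional setting. The target is to show that a $\bisim$-invariant $\SfFOL$-formula $\phi=\phi(x)$ of rank $q$ is already invariant under bounded-depth bisimilarity, say under $\abisim_n$ for some $n=2^{O(q)}$; the theorem then follows by a standard ``Hintikka concept'' argument. Concretely, \textbf{Step~1} is to construct, for each $n$ and each pointed $S5$-interpretation $\CI,w,d$ over the finite signature of $\phi$, an $\SfALCloc$-concept $\chi^n_{\CI,w,d}$ of rank $O(n)$ that defines the $\abisim_n$-class of $\CI,w,d$; this is a routine recursion on $n$ mirroring the two phases of the alternating bisimulation game (Definition~\ref{def:alt-bisim-game}), and uses finiteness of the relevant signature to keep the resulting disjunctive normal form finite. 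Since every rank-$m$ concept is $\bisim_m$-invariant, hence (Lemma~\ref{lem:abisim}(2)) $\abisim_m$-invariant, any $\abisim_n$-invariant property is a finite disjunction of $\abisim_n$-classes and so equivalent to $\ST_x(C)$ for a concept $C$ of rank $O(n)=2^{O(q)}$, which is exactly the promised exponential bound. It thus remains to establish $\abisim_n$-invariance of $\phi$; via Lemma~\ref{lem:abisim} this can equivalently be argued in terms of $\bisim_{2n+1}$ whenever that is more convenient.

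\textbf{Step~2: the right Gaifman distance.} The novelty is to measure distance only in the individual dimension: the \emph{Gaifman graph} of $\CI$ has vertex set $\Delta^\CI$, with an edge between $d$ and $d'$ iff $(d,d')\in r^{\CI,u}$ or $(d',d)\in r^{\CI,u}$ for some world $u$ and role $r$, and the $\ell$-\emph{neighbourhood} of $(w,d)$ is the sub-interpretation on \emph{all} of $W^\CI$ and on the individuals within Gaifman distance $\ell$ of $d$. Making world moves ``free'' is dictated by the semantics of $\Box$, which keeps the individual fixed while only the role constructs move inside a world; a first, easy observation in this metric is that $\CI,w,d$ is fully bisimilar to its restriction to the connected component of $d$, so by $\bisim$-invariance $\phi$ depends only on that component. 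I would then prove an $\SfFOL$ locality property in this metric --- truth of a rank-$q$ formula at $(w,d)$ depends only on the $\ell$-neighbourhood of $(w,d)$ for a suitable $\ell=2^{O(q)}$ --- via an Ehrenfeucht--Fra\"iss\'e game for $\SfFOL$ whose moves are individual moves and world moves (the latter retaining all currently pebbled individuals), in which Duplicator plays a distance-halving strategy that must additionally survive Spoiler's world moves.

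\textbf{Step~3: acyclic bisimilar covers, and assembly.} For a (finite) $\CI$ and point $(w,d)$ I would construct a (finite) $\CI^\ast$ with a point $d^\ast$ such that $\CI^\ast,w,d^\ast\bisim\CI,w,d$ and the Gaifman $\ell$-ball around $d^\ast$ is acyclic (girth $>2\ell+1$). The construction unravels the individual dimension along \emph{world-annotated role paths} --- nodes are sequences $d_0,(u_1,r_1,d_1),\dots,(u_k,r_k,d_k)$ carrying the atomic type of $d_k$ at every world, with role relations chosen so that ``last individual'' is a bisimulation and the Gaifman graph is precisely the unravelling tree --- and then, to regain finiteness, folds the tree back below depth $\ell$ by an Otto-style large-girth reconnection that leaves the $\ell$-ball around the root untouched; over unrestricted models one can instead take the full unravelling with enough copies of each bisimulation type. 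The core lemma --- the $\SfFOL$ analogue of Otto's locality lemma --- then asserts that in $\ell$-acyclic $S5$-interpretations, $\abisim_m$-equivalent points satisfy the same rank-$q$ $\SfFOL$-formulas for suitable $\ell,m=2^{O(q)}$; its proof is a direct Duplicator strategy in the $\SfFOL$ EF-game that uses acyclicity to keep the Gaifman neighbourhoods of corresponding pebbled individuals isomorphic (with geometrically shrinking radius), answers world moves off the bisimulation, and is never caught by a short cycle. Putting it together: if $\CI,w,d\abisim_n\CJ,v,e$ with $n$ a sufficiently large $2^{O(q)}$, then the covers $\CI^\ast,\CJ^\ast$ are still $\abisim_n$-equivalent and still bisimilar to the originals, hence agree on $\phi$ by the core lemma, hence $\CI,w,d$ and $\CJ,v,e$ agree on $\phi$ by $\bisim$-invariance. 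Thus $\phi$ is $\abisim_n$-invariant, and Step~1 concludes, uniformly over finite and over unrestricted models.

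\textbf{Main obstacle.} As flagged in the introduction, the delicate point is calibrating the Gaifman metric of Step~2 against round counts in the (alternating) bisimulation game of Step~3: because $\Box$ lets Spoiler jump arbitrarily in the world dimension --- changing which role edges are present and thereby threatening to expose short cycles in an otherwise tree-like neighbourhood --- one has to show that a bounded number of rounds, still only $2^{O(q)}$, suffices for Duplicator in the $\SfFOL$ EF-game over the acyclic covers, and that the cover construction can enforce acyclicity of radius $2^{O(q)}$ while staying finite. Everything else --- the Hintikka concepts, the component reduction for bisimulation invariance, and the mechanics of the unravelling --- is routine.
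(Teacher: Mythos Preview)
Your proposal follows the same Otto-style recipe as the paper and correctly identifies the key ingredients: the Gaifman metric living only in the individual dimension, the alternating bisimulation game, the $\SfFOL$ Ehrenfeucht--Fra\"iss\'e game, and Hintikka-style characteristic concepts. However, the ``core lemma'' of Step~3 is stated too strongly and is false as written. You claim that in $\ell$-acyclic $S5$-interpretations, $\abisim_m$-equivalent points satisfy the same rank-$q$ $\SfFOL$-formulas, and propose to prove this by a direct Duplicator strategy in the EF-game. But an EF-game argument establishes agreement on \emph{all} rank-$q$ formulas, not just on the bisimulation-invariant $\phi$ at hand; and that fails already for the rank-$1$ formula $\exists y.\,A(y)$: two one-world, role-free interpretations that differ only in an isolated $A$-individual are trivially acyclic and $\abisim_m$-equivalent (indeed fully bisimilar) at any non-$A$ point, yet disagree on $\exists y.\,A(y)$. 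The same slip affects your Step~2 locality claim, which must be restricted to bisimulation-invariant formulas and cannot be obtained by a bare distance-halving EF strategy.

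The paper fixes this by keeping the two steps separate. It first proves $\ell$-locality of the given $\bisim$-invariant $\phi$ (Lemma~\ref{lem:mc1}) via an EF-game played not between arbitrary acyclic covers but between $\CI$ and $\CI|_{U^\ell(d_0)}$, each padded with $n$ disjoint copies of \emph{both}; the copies are what let Duplicator answer far-away individual moves, and the passage to the padded models is where $\bisim$-invariance of $\phi$ is used. Only then does it pass to tree unravellings and restrict to $\ell$-balls (Lemma~\ref{lem:mc2}); in those height-$\ell$ trees, $\abisim_\ell$ upgrades to full $\bisim$, so agreement on $\phi$ follows again from $\bisim$-invariance rather than from any EF argument. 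Your plan can be repaired by inserting the copies-of-both-sides construction into the cover and restating the core lemma accordingly, but at that point it collapses into the paper's argument. As a side remark, the paper's finite-model fix (partial unravellings, Remark~\ref{rem:partunravel}) is also considerably simpler than an Otto-style large-girth reconnection, which is not needed here.
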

\noindent While modal characterization theorems over unrestricted
structures can often be proved using model-theoretic tools such as
compactness~\cite{BenthemThesis}, proofs that apply also to finite
structures typically need to work with some form of
\emph{locality}~\cite{Otto04}. In the basic, one-dimensional case,
this is Gaifman locality~\cite{Gaifman82}, which is based on the
notion of \emph{Gaifman distance} in a first-order model: The
\emph{Gaifman graph} of the model connects two of its points if they
occur together in some tuple that is in the interpretation of some
relation in the model, and the Gaifman distance is then just the graph
distance in the Gaifman graph. We adapt these notions for our purposes
as follows.


\begin{defn}
  The \emph{Gaifman graph} of an $S5$-interpretation~$\CI$ is the
  undirected graph with vertex set $\Delta^\CI$ that has an edge
  between $d$ and $e$ iff $d\neq e$ and either $(d,e)\in r^{\CI,w}$ or
  $(e,d)\in r^{\CI,w}$ for some role name~$r$ and some $w \in W^\CI$.
  The \emph{Gaifman distance}
  $D: \Delta^\CI \times \Delta^\CI \rightarrow \mathbb{N} \cup
  \{\infty\}$
  is just graph distance (length of the shortest connecting path) in
  the Gaifman graph, and for any tuple
  $\bar d = (d_1,\dots,d_k) \in (\Delta^\CI)^k$, the
  \emph{neighbourhood} $U^\ell(\bar d)$ of $\bar d$ with radius $\ell$
  is given by
  \[\textstyle
  U^\ell(\bar d) = \{e \in \Delta^\CI \mid \min_{i=1}^k D(d_i,e) \le
  \ell \}.
  \]
\end{defn}
\begin{rem}\label{rem:challenge}
  It may be slightly surprising that Gaifman graphs for
  $S5$-interpretations live only in the individual dimension, so that
  implicit steps between worlds are effectively discounted (a point
  where the $S5$ structure on worlds becomes important). The technical
  reason for this is that it does not seem easily possible to include
  the worlds in the Gaifman graph without creating unduly short
  paths. The fact that world steps count $0$ in the Gaifman distance
  creates a certain amount of tension with the fact that bisimulation
  games do feature explicit $W$-moves
  (Definition~\ref{def:bisim-game}). Our alternating bisimulation
  games (Definition~\ref{def:alt-bisim-game}) serve mainly to address
  this point.
\end{rem}
\begin{defn}[Locality]
  The \emph{restriction} $\CI|_U$ of an $S5$-interpretation $\CI$ to a
  subset $U\subseteq\Delta^\CI$ is given by $W^{\CI|_U}=W^\CI$,
  $\Delta^{\CI|_U}=U$, $A^{\CI|_U,w}=A^{\CI,w}\cap U$ for $A\in\CN$,
  and $r^{\CI|_U,w}=r^{\CI,w}\cap(U\times U)$ for $r\in\RN$. An
  $\SfFOL$-formula $\phi$ with $k$ free variables is
  \emph{$\ell$-local} for $\ell\ge 0$ if for every $S5$-interpretation
  $\CI$, $w\in W^\CI$, and and $\bar d\in(\Delta^\CI)^k$,
  \begin{equation*}
    \CI,w,\bar d\models\phi\quad\text{iff}\quad
    \CI|_{U^\ell(\bar d)},w,\bar d\models\phi.
  \end{equation*}
\end{defn} 
\noindent In these terms, we organize the proof of our main result as
follows, following a generic strategy proposed by
Otto~\shortcite{Otto04}:
\begin{proof}[Proof of Theorem~\ref{thm:rosen} (Sketch)]
  For $\phi$ $\bisim$-invariant of rank~$n$, we prove the
  following steps in order:
  \begin{itemize}
  \item $\phi$ is $\ell$-local, where $\ell = 3^n$
  (Lemma~\ref{lem:mc1}).
  \item $\phi$ is $\bisim_{2\ell+1}$-invariant (Lemma~\ref{lem:mc2}).
  \item $\phi$ is equivalent to a concept of rank $2\ell+1$
  (Lemma~\ref{lem:mc3}).
  \end{itemize}
  (The locality bound is slightly generous, for simplicity.)
\end{proof}
\noindent Standard FOL comes with its own notion of invariance, with
respect to \emph{Ehrenfeucht-Fra\"iss\'e
  equivalence}~\cite{Libkin04}. This notion has been extended to
$\SfFOL$ by complementing it with bisimilarity in the world
dimension~\cite{Benthem01,SturmWolter01}. Here, we introduce a bounded
version of this equivalence, which we phrase in game-theoretic terms;
this will be instrumental in the proof of locality:

\begin{defn}[Bounded Ehrenfeucht-Fra\"iss\'e game for $\SfFOL$]\label{defn:ef}
  Let $\CI$, $\CJ$ be $S5$-interpretations, let
  $(w_0,d_0)\in W^\CI\times\Delta^\CI$,
  $(v_0,e_0)\in W^\CJ\times\Delta^\CJ$, and let $n \ge 0$. 
  The \emph{$n$-round Ehrenfeucht-Fra\"iss\'e game for $\CI,w_0,d_0$ and
  $\CJ,v_0,e_0$} is played by players $S$ and $D$.
  The \emph{configurations} are quadruples $((w,\bar d),(v,\bar e))$,
  where $w \in W^\CI$, $v \in W^\CJ$ and $\bar d$ and $\bar e$ are
  finite sequences over $\Delta^\CI$ and $\Delta^\CJ$, respectively.
  The initial configuration is $((w_0,d_0),(v_0,e_0))$.
  The possible moves from configuration $((w,\bar d),(v,\bar e))$ are:
  \begin{enumerate}
  \item $S$ may pick a world $w'\in W^\CI$, and $D$ then needs to pick
    a world $v'\in W^\CJ$; the new configuration is 
    $((w',\bar d),(v',\bar e))$.
  \item Same with the roles of $\CI$ and $\CJ$ interchanged.
  \item $S$ may pick some $d \in \Delta^\CI$ and $D$ then needs to
  pick $e \in \Delta^\CJ$. The new configuration is $((w,\bar d
  d),(v,\bar e e))$.
  \item Same with the roles of $\CI$ and $\CJ$ interchanged.
  \end{enumerate}
  The winning conditions are as in the $n$-round bisimulation game,
  except that a configuration is now \emph{winning for~$S$} if it
  fails to be a partial isomorphism. Here,
  $((w,(d_0,\dots,d_k)),(v,(e_0,\dots,e_k)))$ is a \emph{partial
    isomorphism} if
  \begin{itemize}
  \item for all $0 \le i,j \le k$, $d_i = d_j \Leftrightarrow e_i =
  e_j$; and
  \item for all $0 \le i_1,\dots,i_m\le k$ and $m$-ary relation
  symbols $R$, $(d_{i_1},\dots,d_{i_m})\in R^{\CI,w} \Leftrightarrow
  (e_{i_1},\dots,e_{i_m})\in R^{\CJ,v}$.
  \end{itemize}
  \noindent We say that $\CI,w_0,d_0$ and $\CJ,v_0,e_0$ are
  \emph{$S5$-Ehrenfeucht-Fra\"iss\'e equivalent up to depth $n$}, and write
  \begin{equation*}
    \CI,w_0,d_0 \ef_n \CJ,v_0,e_0,
  \end{equation*}
  if $D$ has a winning strategy in this game.
\end{defn}
\noindent As announced, $\SfFOL$ is invariant under
$S5$-Ehrenfeucht-Fra\"iss\'e equivalence. For the unbounded variant, this
has been shown in earlier work~\cite{SturmWolter01}; for our
bounded variant, invariance takes the following shape:
\begin{lem}[Bounded $S5$-Ehrenfeucht-Fra\"iss\'e
invariance]\label{lem:efequiv}
  Every $\SfFOL$-formula of rank at most $n$ with one free variable is
  $\ef_n$-invariant.
\end{lem}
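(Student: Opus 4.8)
The proof is the expected Ehrenfeucht-Fra\"iss\'e argument (cf.\ the classical first-order case~\cite{Libkin04}), carried out by induction on $n$ with an inner structural induction on formulas. To make the induction go through, I would first strengthen the statement to arbitrary finite tuples of free variables: \emph{if $\CI,w,\bar d \ef_n \CJ,v,\bar e$} --- meaning $D$ has a winning strategy in the $n$-round game from the configuration $((w,\bar d),(v,\bar e))$, where $\bar d$, $\bar e$ are sequences of equal length --- \emph{and $\phi$ is an $\SfFOL$-formula of rank at most $n$ whose free variables are among those listed by $\bar d$, then $\CI,w,\bar d\models\phi$ iff $\CJ,v,\bar e\models\phi$.} Since configurations of the game already carry such tuples, this generalization is the natural one, and the lemma is the special case where $\bar d$, $\bar e$ have length one.

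For the base case $n=0$, a rank-$0$ formula is a Boolean combination of atomic formulas $R(x_{i_1},\dots,x_{i_m})$ and equations $x_i=x_j$. Since $D$ has a winning strategy in the $0$-round game, the initial configuration is not winning for $S$, i.e.\ it is a partial isomorphism; by definition this says precisely that each such atomic formula has the same truth value under $\bar d$ in $\CI,w$ as under $\bar e$ in $\CJ,v$, and the claim extends to Boolean combinations.

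For the inductive step, assume the claim for $n$ and suppose $\CI,w,\bar d\ef_{n+1}\CJ,v,\bar e$; we argue by induction on the structure of a formula $\phi$ of rank at most $n+1$. The atomic cases are again immediate because $D$'s strategy keeps the current configuration a partial isomorphism, and the Boolean cases are routine (negation using that the inductive hypothesis is a biconditional). If $\phi=\exists x.\,\psi$, then $\psi$ has rank at most $n$; given a witness $d'\in\Delta^\CI$ for $\CI,w,\bar d\models\phi$, let $S$ pick $d'$ in $\CI$ and let $D$ answer by its winning strategy with some $e'\in\Delta^\CJ$, so that $\CI,w,\bar d d'\ef_n\CJ,v,\bar e e'$; the outer inductive hypothesis yields $\CJ,v,\bar e e'\models\psi$, hence $\CJ,v,\bar e\models\phi$, and the converse direction uses the mirrored move. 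If $\phi=\Box\psi$, then again $\psi$ has rank at most $n$; if $\CI,w,\bar d\models\Box\psi$ but $\CJ,v,\bar e\not\models\Box\psi$, pick $v'\in W^\CJ$ with $\CJ,v',\bar e\not\models\psi$, let $S$ make the $W$-move picking $v'$ in $\CJ$ and let $D$ answer with $w'\in W^\CI$, giving $\CI,w',\bar d\ef_n\CJ,v',\bar e$; since $\CI,w,\bar d\models\Box\psi$ forces $\CI,w',\bar d\models\psi$ (the $S5$ modality quantifies over \emph{all} worlds), the outer inductive hypothesis gives $\CJ,v',\bar e\models\psi$, a contradiction. The symmetric case uses the mirrored $W$-move. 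Throughout, one uses the standard fact that after one round of the $(n+1)$-round game a winning strategy for $D$ restricts to a winning strategy in the remaining $n$-round game.

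The only point genuinely specific to the modal setting, and the one I expect to need the most care, is the $\Box$ case: it is essential that the $S5$ modality ranges over \emph{all} worlds, so that $\Box\psi$ holding at $w$ forces $\psi$ at whichever world is reached in the game, no matter how $D$ plays; and that a $W$-move consumes one full round of the game, matching the drop in rank when a $\Box$ is peeled off. Both of these are built into Definitions~\ref{def:bisim-game} and~\ref{defn:ef}. (Had the world dimension carried a nontrivial accessibility relation rather than an $S5$ frame, this case would instead require the worlds to be tracked in the game analogously to individuals.)
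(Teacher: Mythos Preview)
Your proof is correct and follows essentially the same approach as the paper's: both generalize the statement to arbitrary finite tuples of free variables (with $\ef_n$ correspondingly extended to arbitrary starting configurations), then argue by structural induction on~$\phi$, handling atomic formulas via the partial-isomorphism condition, Boolean connectives trivially, and the $\exists$ and $\Box$ cases by letting~$S$ play the corresponding move and invoking the inductive hypothesis at rank~$n-1$. The only presentational difference is that the paper packages this as a single structural induction (with the rank bound~$n$ implicitly universally quantified), whereas you make the induction on~$n$ explicit as an outer layer; the content is the same.
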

\noindent We use this to prove locality:
\begin{lem}\label{lem:mc1}
  Let $\phi$ be a $\bisim$-invariant $\SfFOL$-formula of
  rank~$n$. Then $\phi$ is $\ell$-local for $\ell=3^n$.
\end{lem}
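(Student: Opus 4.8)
The plan is to obtain the required equivalence
\[
\CI,w,d\models\phi \quad\Longleftrightarrow\quad \CI|_{U^\ell(d)},w,d\models\phi
\qquad(\ell=3^n)
\]
by sandwiching both of its sides between auxiliary $S5$-interpretations. Concretely, I would construct interpretations $\CI^*,\CJ^*$, with distinguished world and individual, such that
\[
\CI,w,d\bisim\CI^*,\qquad
\CI|_{U^\ell(d)},w,d\bisim\CJ^*,\qquad
\CI^*\ef_n\CJ^*.
\]
Since $\phi$ is $\bisim$-invariant by hypothesis and $\ef_n$-invariant by Lemma~\ref{lem:efequiv} (it has rank $n$ and a single free variable), chaining these three facts yields exactly the equivalence above, in both directions at once.

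The interpretations $\CI^*,\CJ^*$ are obtained by \emph{padding}. Keeping the world set $W^\CI$ fixed throughout, let $\CI^*$ be the disjoint union of $\CI$ with $q$ further disjoint copies of $\CI$ (disjoint union of $S5$-interpretations: union the individual domains, interpret concept and role names componentwise, add no cross-edges), and let $\CJ^*$ be the disjoint union of $\CI|_{U^\ell(d)}$ with the same $q$ copies of $\CI$; in both, the distinguished point is $(w,d)$ in the first summand, and $q$ is a bound depending only on $n$ (in fact $q=n$ will do). The bisimulation type of a point is unaffected by adding individual-disjoint components --- the witnessing bisimulation relates each $(u,a)$ to the copy of $(u,a)$ in the enlarged interpretation, and role-moves never leave the first summand --- so $\CI,w,d\bisim\CI^*$ and $\CI|_{U^\ell(d)},w,d\bisim\CJ^*$. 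It remains to establish $\CI^*\ef_n\CJ^*$.

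This last point is the substance of the proof: a Duplicator strategy in the $n$-round $\SfFOL$ Ehrenfeucht-Fra\"iss\'e game on $(\CI^*,w,d)$ and $(\CJ^*,w,d)$, obtained by adapting the standard Gaifman/Hanf locality analysis of first-order Ehrenfeucht-Fra\"iss\'e games to the two-dimensional setting. Two features make the adaptation work. First, $W$-moves are benign: $\CI^*$ and $\CJ^*$ have literally the same world set $W^\CI$, so Duplicator copies Spoiler's world; this moves no pebbled individual, hence does not affect the Gaifman-distance bookkeeping, and since the restriction defining $\CI|_{U^\ell(d)}$ was carried out world by world, all local atomic facts still agree at the new world. (This is exactly the tension flagged in Remark~\ref{rem:challenge}; having put the Gaifman graph on the individual dimension only is what lets world steps cost $0$.) Second, for $\Delta$-moves one runs the usual locality invariant, roughly: after $k$ rounds Duplicator maintains an isomorphism between the radius-$3^{\,n-k}$ Gaifman balls around the pebbled individuals, regarded as families of $\ALC$-interpretations over $W^\CI$; a new individual played within distance $2\cdot 3^{\,n-k-1}$ of an old pebble is answered inside this isomorphism (the identity $2\cdot 3^{\,n-k-1}+3^{\,n-k-1}=3^{\,n-k}$ is what forces the base $3$, hence $\ell=3^n$), while one played further away is answered in a fresh one of the $q$ surplus copies of $\CI$, using that on the $\CI^*$ side the part of $\CI$ outside $U^\ell(d)$ looks locally like a piece of another copy of $\CI$. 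The base case uses that $\CI|_{U^\ell(d)}$ is by definition the substructure induced on $U^{3^n}(d)$, and $q\ge n$ guarantees the surplus copies are never used up in $n$ rounds.

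The main obstacle I anticipate is making this last analysis fully rigorous in the two-dimensional $S5$ setting: verifying that an arbitrary interleaving of $W$-moves and $\Delta$-moves can never help Spoiler --- the game-theoretic counterpart of ``world steps count $0$ in the Gaifman distance'' --- and pinning the radii down precisely enough that the padding by $q$ copies really equalizes the far-away behaviour of $\CI^*$ and $\CJ^*$ even though their $d$-components differ in size; this is also where the chosen bound $\ell=3^n$ leaves the slack that makes the bookkeeping go through (and makes it ``slightly generous''). Everything else follows the one-dimensional template of Otto~\shortcite{Otto04}; in particular all constructed interpretations are finite whenever $\CI$ is, so the same argument proves $\ell$-locality over finite $S5$-interpretations.
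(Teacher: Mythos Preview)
Your overall plan matches the paper's: pad with disjoint copies to obtain structures bisimilar to the originals yet $\ef_n$-equivalent to each other, then invoke Lemma~\ref{lem:efequiv}. The shrinking-radius invariant and the handling of $W$-moves are also the paper's. However, the specific padding you propose is too lean: adding only copies of~$\CI$ does not suffice.

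The gap is in the ``far'' case when Spoiler plays on the $\CJ^*$ side, inside the first summand $\CI|_{U^\ell(d)}$, at an element near the boundary of $U^\ell(d)$. The $\ell_{i+1}$-neighbourhood of such an element, computed in $\CI|_{U^\ell(d)}$, is a \emph{truncated} piece of~$\CI$ and need not be isomorphic to the $\ell_{i+1}$-neighbourhood of any element of~$\CI$. Concretely, let $\CI$ be a long directed $r$-cycle over a single world; then every element of your $\CI^*$ has an $r$-successor, but the endpoint $c_\ell$ of the path $\CI|_{U^\ell(c_0)}$ has none. Already for $n=2$, Spoiler plays $c_\ell$ in $\CJ^*$ (it is far from $c_0$ since $\ell>2\ell_1$), and in the next round plays the $r$-successor of whatever Duplicator answered in $\CI^*$; Duplicator cannot produce an $r$-successor of $c_\ell$ in $\CJ^*$, so $\CI^*,w,c_0\not\ef_n\CJ^*,w,c_0$. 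Your justification ``on the $\CI^*$ side the part of $\CI$ outside $U^\ell(d)$ looks locally like a piece of another copy of $\CI$'' covers only the case where Spoiler plays far in $\CI^*$, not this symmetric one.

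The fix is the paper's construction: pad \emph{both} sides with $n$ disjoint copies of \emph{each} of $\CI$ and $\CJ=\CI|_{U^\ell(d)}$. Then in the far case Duplicator answers with the same element in a fresh copy \emph{of the same type} ($\CI$ or $\CJ$) as the summand Spoiler played in, and the required local isomorphism is the identity.
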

\begin{proof}[Proof (sketch)]
  Let $\CI$ be an $S5$-interpretation and
  $(w_0,d_0) \in W^\CI\times\Delta^\CI$.  Put
  $\CJ = \CI|_{U^\ell(d_0)}$; we need to show that
  $\CI,w_0,d_0 \models \phi \Leftrightarrow \CJ,w_0,d_0 \models \phi$.
  By $\bisim$-invariance, we can disjointly extend the domains of
  $\CI$ and $\CJ$ without affecting satisfaction of $\phi$. We thus
  extend both $\CI$ and $\CJ$ with $n$ copies of both $\CI$ and $\CJ$
  each, obtaining $\CI'$ and~$\CJ'$, respectively.

  By Lemma~\ref{lem:efequiv}, it suffices to show that
  $\CI',w_0,d_0\ef_n\CJ',w_0,d_0$. The winning strategy for $D$ is to
  maintain the following invariant, where we put $\ell_i = 3^{n-i}$
  for $0 \le i \le n$:
  
  \begin{quote}
    If $((w,\bar d),(v,\bar e))$ is the current configuration, with
    $\bar d = (d_0,\dots,d_i), \bar e = (e_0,\dots,e_i)$, then $w=v$
    and there is an isomorphism between $\CI'|_{U^{\ell_i}(\bar d)}$
    and $\CJ'|_{U^{\ell_i}(\bar e)}$ mapping each $d_j$ to $e_j$.
  \end{quote}
  $D$ maintains the invariant as follows: Whenever $S$ picks a new
  world in either interpretation, $D$ can just pick the same world in
  the other interpretation, as $\CI'$ and $\CJ'$ have the same set of
  worlds. Whenever~$S$ picks a new individual $d$ in
  $U^{2\ell_{i+1}}(\bar d)$ or $U^{2\ell_{i+1}}(\bar e)$ (where $\bar
  d = (d_0,\dots,d_i)$ and $\bar e = (e_0,\dots,e_i)$), then $d$ is in
  the domain or range of the isomorphism in the invariant, and $D$
  picks his response according to the isomorphism. Otherwise, $D$
  picks a `fresh' copy of the appropriate type ($\CI$ or $\CJ$,
  depending on where~$d$ lies) in the other interpretation and
  responds with $d$ in that copy.
\end{proof}

\noindent Having proved locality of $\bisim$-invariant formulas, we
next establish invariance even under finite-depth bisimilarity. To
this end, we need \emph{tree unravellings} of $S5$-interpretations:

\begin{defn}[Tree unravelling]
  Let $\CI$ be an interpretation and $d_0 \in \Delta^\CI$. The
  \emph{tree unravelling} $\CI^\ast_{d_0}$ of $\CI$ is the
  interpretation with set $W^{\CI^\ast_{d_0}}=W^\CI$ of worlds; with
  domain $\Delta^{\CI^\ast_{d_0}}$ consisting of all paths of the form
  $(d_0,\dots,d_k)$ such that for each $i\in\{0,\dots,k-1\}$,
  $(d_i,d_{i+1})\in r^{\CI,w}$ for some role name~$r$ and some world
  $w$; and with the following interpretations of concept and role
  names:
\begin{gather*}
  A^{\CI^\ast_{d_0},w} = \{ \bar d \in \Delta^{\CI^\ast_{d_0}} \mid
  \pi(\bar d) \in A^{\CI,w} \} \\
  r^{\CI^\ast_{d_0},w} = \{ (\bar d,\bar d d) \mid \bar d \in
  \Delta^{\CI^\ast_{d_0}}, (\pi(\bar d),d) \in r^{\CI,w} \}
\end{gather*}
where $\pi: (d_0,\dots,d_k) \mapsto d_k$ is projection to the last
entry.
\end{defn}

\noindent It is then easy to show that $\CI,w,d \bisim \CI^\ast_d,w,d$.
In fact, a bisimulation is given by the function $\pi$ (and identity
on the set of worlds). Also,
$\CI^\ast_d,w,d\abisim_\ell\CI^\ast_d|_{U^\ell(d)},w,d$.

\begin{lem}\label{lem:mc2}
  Let $\phi = \phi(x)$ be $\bisim$-invariant and
  $\ell$-local. Then~$\phi$ is $\bisim_{2\ell+1}$-invariant.
\end{lem}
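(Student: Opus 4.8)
The plan is to reduce, using tree unravellings together with the $\ell$-locality of $\phi$, to a situation where ordinary bisimilarity holds outright, and to bridge the gap between depth-$(2\ell+1)$ bisimilarity and full bisimilarity by passing through the alternating game (Lemma~\ref{lem:abisim}). Fix an $\SfFOL$-formula $\phi=\phi(x)$ that is $\bisim$-invariant and $\ell$-local, and assume $\CI,w,d\bisim_{2\ell+1}\CJ,v,e$; the goal is to show $\CI,w,d\models\phi$ iff $\CJ,v,e\models\phi$. First I would pass to the tree unravellings $\CI^\ast_d$ and $\CJ^\ast_e$. Since $\CI,w,d\bisim\CI^\ast_d,w,d$ and symmetrically for $\CJ$, $\bisim$-invariance of $\phi$ lets us replace each pointed interpretation by its unravelling, and then $\ell$-locality, applied to $\CI^\ast_d$ and $\CJ^\ast_e$, reduces the goal to
\[
  \CI^\ast_d|_{U^\ell(d)},w,d\models\phi\quad\Longleftrightarrow\quad\CJ^\ast_e|_{U^\ell(e)},v,e\models\phi .
\]
Invoking $\bisim$-invariance a second time, it thus suffices to prove that the two truncated unravellings are \emph{fully} bisimilar,
\[
  \CI^\ast_d|_{U^\ell(d)},w,d\ \bisim\ \CJ^\ast_e|_{U^\ell(e)},v,e .
\]

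To prove this bisimilarity I would first transport the round bound to the unravellings and then down to their truncations. Reading full bisimilarity as $\bisim_{2\ell+1}$, the chain $\CI^\ast_d,w,d\bisim\CI,w,d\bisim_{2\ell+1}\CJ,v,e\bisim\CJ^\ast_e,v,e$ together with transitivity of $\bisim_{2\ell+1}$ gives $\CI^\ast_d,w,d\bisim_{2\ell+1}\CJ^\ast_e,v,e$, hence $\CI^\ast_d,w,d\abisim_\ell\CJ^\ast_e,v,e$ by Lemma~\ref{lem:abisim}(1). Composing this with the two instances of $\CI^\ast_d,w,d\abisim_\ell\CI^\ast_d|_{U^\ell(d)},w,d$ recorded above (using that $\abisim_\ell$ is symmetric and transitive) yields $\CI^\ast_d|_{U^\ell(d)},w,d\abisim_\ell\CJ^\ast_e|_{U^\ell(e)},v,e$. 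It then remains to upgrade this depth-$\ell$ alternating bisimilarity to full, unbounded bisimilarity, and here the geometry of truncated unravellings is decisive: the Gaifman graph of $\CI^\ast_d$ is exactly the unravelling tree, so $U^\ell(d)$ is the set of paths of length at most $\ell$, and every $\Delta$-move in $\CI^\ast_d|_{U^\ell(d)}$ lengthens the current path by one. Hence in the alternating game on these truncated structures each round forces one $\Delta$-move, so Spoiler runs out of $\Delta$-moves after at most $\ell$ rounds and can then only make a single further $W$-move --- precisely the extra phase of type~\ref{phase:w} that Duplicator's winning strategy for the $\ell$-round alternating game already handles. That strategy is therefore also winning in the unbounded alternating game, so $\CI^\ast_d|_{U^\ell(d)},w,d\abisim\CJ^\ast_e|_{U^\ell(e)},v,e$, and hence the two structures are $\bisim$-equivalent by Lemma~\ref{lem:abisim}(3), as required.

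The step I expect to be the main obstacle is exactly this interface between Gaifman distance and the bisimulation game --- the tension flagged in Remark~\ref{rem:challenge}: a $W$-move costs nothing in Gaifman distance yet consumes a whole round of an ordinary bisimulation game, so $\ell$-locality cannot be expected to produce $\bisim_\ell$-invariance directly, and it is the detour through the alternating game and Lemma~\ref{lem:abisim} (with the factor $2$ and the extra $W$-phase) that makes the bookkeeping close. A secondary technical point is the finite case: the tree unravelling $\CI^\ast_d$ is infinite even when $\CI$ is finite, so the very first reduction is unavailable when $\phi$ is only assumed $\bisim$-invariant over finite interpretations; there one must instead replace $\CI$ by a \emph{finite} interpretation that is bisimilar to it and whose $\ell$-neighbourhood of $d$ is tree-shaped --- a ``tree-like cover'' obtained by folding the unravelling back onto its lower levels, \`a la Rosen --- after which the same chain of equivalences applies verbatim.
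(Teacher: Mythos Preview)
Your proposal is correct and follows essentially the same route as the paper: pass to tree unravellings, use Lemma~\ref{lem:abisim}(1) to convert $\bisim_{2\ell+1}$ into $\abisim_\ell$, restrict to the $\ell$-neighbourhoods via the recorded fact $\CI^\ast_d,w,d\abisim_\ell\CI^\ast_d|_{U^\ell(d)},w,d$, upgrade $\abisim_\ell$ to $\abisim$ using that the truncated unravellings have depth at most $\ell$, and conclude via Lemma~\ref{lem:abisim}(3), $\bisim$-invariance and $\ell$-locality. You also correctly flag the finite-model caveat and its fix by partial unravellings, exactly as in Remark~\ref{rem:partunravel}.
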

\begin{proof}[Proof (sketch)]
  Let $\CI,w,d \bisim_{2\ell+1} \CJ,v,e$ and $\CI,w,d \models \phi$.
  We need to show that $\CJ,v,e \models \phi$. By
  Lemma~\ref{lem:abisim}, $\CI,w,d \abisim_\ell \CJ,v,e$. By
  $\bisim$-invariance of $\phi$, we may pass from $\CI$ and $\CJ$ to
  their unravellings, and by $\ell$-locality of $\phi$, we may then
  restrict those to the radius $\ell$ neighbourhoods of~$d$ and~$e$,
  respectively. The resulting interpretations
  $\CI^\ast_d|_{U^\ell(d)}$ and $\CJ^\ast_e|_{U^\ell(e)}$ then are
  trees of height at most $\ell$ in the individual dimension.
  
  Now
  $\CI^\ast_d|_{U^\ell(d)},w,d \abisim_\ell
  \CJ^\ast_e|_{U^\ell(e)},v,e$,
  i.e.\ $D$ wins the alternating $\ell$-round bisimulation game. Due
  to the tree structure on the domains, $D$'s winning strategy is also
  winning for the unbounded alternating bisimulation game, as
  eventually a leaf node will be reached and $S$ will not have a legal
  move in the second phase of a round. So, using
  Lemma~\ref{lem:abisim} again, $D$ wins the unbounded ordinary
  bisimulation game, and therefore $\CJ,v,e \models \phi$ by
  $\bisim$-invariance of $\phi$.
\end{proof}

\begin{rem}\label{rem:partunravel}
  In the case of finite interpretations (the `Rosen' part of the
  characterization theorem), there is a caveat: the tree unravelling
  of a finite interpretation is not finite in general, so we
  cannot use $\bisim$-invariance over finite interpretations to pass
  from interpretations to their unravellings. To remedy this, we work
  with \emph{partial unravellings} up to level $\ell$ instead. Such
  a partial unravelling is constructed by restricting the tree
  unravelling $\CI^\ast_{d_0}$ to the radius $\ell+1$ neighbourhood
  of $d_0$ and then identifying each leaf node $\bar d$ with the
  corresponding element $\pi(\bar d)$ in a fresh disjoint copy of
  $\CI$.  The resulting interpretation is clearly finite if~$\CI$ is
  finite, and readily shown to be bisimilar to $\CI$. Also,
  the radius $\ell$ neighbourhood of $d_0$ in the partial unravelling
  is a tree.
\end{rem}

\noindent Finally, we construct an equivalent $\SfALC$-concept for a
given formula that is invariant under finite-depth bisimulation. We
will make use of normal forms, as introduced by
Fine~\shortcite{fine1975normal}.

Since the formula $\phi$ is fixed, we can assume w.l.o.g.\ that $\CN$
and $\RN$ are finite sets $\CN=\{A_1,\dots,A_s\}$ and
$\RN=\{r_1,\dots,r_t\}$.

\begin{defn}\label{defn:nf}
  The sets $\mathsf{nf}_k$ and $\mathsf{at}_k$ of \emph{normal forms}
  and \emph{atoms} of rank $k \ge 0$, respectively, are defined by
  induction:
\begin{multline*}
  \mathsf{at}_k = \{ A_1,\dots,A_s \} \cup
  \{ \exists r_i.C \mid 1 \le i \le t, C \in \mathsf{nf}_{k-1} \}
  \cup \\
  \{ \Diamond C \mid C \in \mathsf{nf}_{k-1} \}
\end{multline*}
and $\mathsf{nf}_k$ is the set of finite conjunctions of the form
$\Land_{B\in\mathsf{at}_k} \varepsilon_B B$ (according to some fixed
total ordering on $\mathsf{at}_k$) where each $\varepsilon_B$ is
either nothing or negation. Moreover, $\mathsf{nf}_{-1} = \emptyset$
for convenience.
\end{defn}
\noindent 
These normal forms have the following properties:

\begin{myitemize}
\item For any $\CI,w,d$, there is exactly one normal form
  $C^k_{\CI,w,d}$ of rank $k$ such that $\CI,w,d \models
  \nf{k}{\CI,w,d}$.
\item We have $\CI,w,d \bisim_k \CJ,v,e$ iff
  $\nf{k}{\CI,w,d} = \nf{k}{\CJ,v,e}$.
\end{myitemize}
\begin{lem}\label{lem:mc3}
  Every $\bisim_k$-invariant $\SfFOL$-formula $\phi$ with one free
  variable~$x$ can be expressed as an $\SfALCloc$-concept of rank~$k$,
  namely
\[\textstyle
  \phi \equiv \ST_x\big(\bigvee_{\CI,w,d \models \phi} \nf{k}{\CI,w,d}\big).
\]
\end{lem}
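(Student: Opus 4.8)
The plan is to read the right-hand side as a \emph{finite} disjunction and then verify the claimed equivalence pointwise, using only the two recorded properties of Fine normal forms. First I would observe that, since $\CN$ and $\RN$ are finite, an easy induction on $k$ shows that $\mathsf{at}_k$ and hence $\mathsf{nf}_k$ are finite sets. Consequently, although the index set $\{(\CI,w,d)\mid\CI,w,d\models\phi\}$ is proper-class-sized, the collection of disjuncts $\{\nf{k}{\CI,w,d}\mid\CI,w,d\models\phi\}$ is a finite subset of $\mathsf{nf}_k$, so $C:=\bigvee_{\CI,w,d\models\phi}\nf{k}{\CI,w,d}$ genuinely denotes an $\SfALCloc$-concept (with $C=\bot$ in the degenerate case that $\phi$ is unsatisfiable). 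Every element of $\mathsf{nf}_k$ has rank at most $k$, and a Boolean combination does not increase the rank, so $C$ has rank at most $k$, as required.

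It then remains to show $\phi\equiv\ST_x(C)$. By the standard-translation lemma it suffices to show, for every $\CJ,v,e$, that $\CJ,v,e\models\phi$ iff $\CJ,v,e\models C$. For the forward direction, if $\CJ,v,e\models\phi$ then $\nf{k}{\CJ,v,e}$ is one of the disjuncts of $C$ (take $\CI=\CJ$, $w=v$, $d=e$), and $\CJ,v,e\models\nf{k}{\CJ,v,e}$ by the first normal-form property, so $\CJ,v,e\models C$. For the converse, suppose $\CJ,v,e\models C$; then $\CJ,v,e\models\nf{k}{\CI,w,d}$ for some $(\CI,w,d)$ with $\CI,w,d\models\phi$. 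Since $\CJ,v,e$ satisfies a \emph{unique} rank-$k$ normal form (first property again), we obtain $\nf{k}{\CJ,v,e}=\nf{k}{\CI,w,d}$, hence $\CJ,v,e\bisim_k\CI,w,d$ by the second property. Finally, $\bisim_k$-invariance of $\phi$ together with $\CI,w,d\models\phi$ yields $\CJ,v,e\models\phi$, completing the equivalence.

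As for difficulty: the argument above is essentially bookkeeping, and the only point requiring a moment's care is the reinterpretation of the class-sized disjunction as a finite one via finiteness of $\mathsf{nf}_k$. The genuinely substantial content sits in the two itemized properties stated just before the lemma — in particular the equivalence ``$\CI,w,d\bisim_k\CJ,v,e$ iff $\nf{k}{\CI,w,d}=\nf{k}{\CJ,v,e}$'', proved by induction on $k$, where one matches the $\Diamond$-atoms and $\exists r$-atoms occurring in a normal form against, respectively, the $W$-moves and $\Delta$-moves available to Spoiler in the depth-$k$ bisimulation game (invoking the alternating normalization of Definition~\ref{def:alt-bisim-game} to absorb runs of consecutive $W$-moves). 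Here we take those properties as given, so no further obstacle arises in the proof of the lemma itself.
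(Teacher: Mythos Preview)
Your proof is correct and follows essentially the same route as the paper: observe that the disjunction is finite, then verify both implications using the two recorded properties of normal forms together with $\bisim_k$-invariance. One small aside: your parenthetical remark that the proof of the normal-form properties invokes the alternating game is not quite right---the paper's appendix handles those properties directly with the ordinary $k$-round game (each $\Diamond$-atom and $\exists r$-atom corresponds to a single round)---but since you explicitly take those properties as given, this does not affect your argument for the lemma itself.
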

\begin{proof}
  First, note that the above disjunction is finite, even though there
  may be infinitely many interpretations satisfying~$\phi$. We denote
  the arising $\SfALCloc$-concept by~$C$.

  For the implication from $\phi$ to $\ST_x(C)$, just note that if
  $\CI,w,d \models \phi$, then $\nf{k}{\CI,w,d}$ is one of the
  disjuncts in $C$. 

  For the reverse implication, let $\CI,w,d \models C$ and
  let~$\nf{k}{\CJ,v,e}$ be a disjunct in $C$ such that
  $\CI,w,d \models \nf{k}{\CJ,v,e}$. By the above properties of normal
  forms, it follows that $\nf{k}{\CI,w,d} = \nf{k}{\CJ,v,e}$ and
  therefore $\CI,w,d \bisim_k \CJ,v,e$. By definition,
  $\CJ,v,e \models \phi$, so $\CI,w,d \models \phi$ by
  $\bisim_k$-invariance of $\phi$, as desired.
\end{proof}
\noindent This completes the proof of Theorem~\ref{thm:rosen} as outlined
above.

\paragraph{Characterization within two-sorted FOL} 
The natural first-order correspondence language for
$\SfFOL$~\cite{SturmWolter01} is a two-sorted language with sorts
$\domain$ and $\world$; for every $n$-ary predicate $R$ in the
$\SfFOL$ language, the two-sorted language has an $n+1$-ary
predicate~$R$ with $n$ arguments of sort $\domain$ and one additional
argument of sort $\world$. This language $\mathcal{SL}$ is interpreted
in the standard way over two-sorted first-order structures; for the
two-sorted language induced by the correspondence language of
$\SfALCloc$, these are just $S5$-interpretations. One has a
translation $(-)^{\dagger_v}$ of $\SfFOL$ into the two-sorted
first-order language, given by
$R(x_1,\dots,x_n)^{\dagger_v}=R(x_1,\dots,x_n,v)$ and
$(\Box\phi)^{\dagger_v}=\forall v.\,(\phi^{\dagger_v})$, and
commutation with all other constructs, where $v$ is a variable of sort
$\world$. Sturm and Wolter~\shortcite{SturmWolter01} show that
$\SfFOL$ is, over unrestricted $S5$-interpretations, precisely the
fragment of $\mathcal{SL}$ that is determined by invariance under
\emph{potential $S5$-isomorphisms}, i.e.\ unbounded
$S5$-Ehrenfeucht-Fra\"iss\'e equivalence, defined as above but without
a bound on the number of rounds. Since every potential
$S5$-isomorphism is a bisimulation, we can combine this result with
Theorem~\ref{thm:rosen} to obtain that $\SfALCloc$ is the bisimulation
invariant fragment of~$\mathcal{SL}$:
\begin{cor}[Modal characterization within
  $\mathcal{SL}$]\label{cor:benthem}
  Let $\phi = \phi(x,v)$ be a $\bisim$-invariant formula with one
  variable $x$ of sort $\domain$ and one free variable $v$ of sort
  $\world$, in the two-sorted first-order language
  $\mathcal{SL}$. Then there exists an $\SfALCloc$-concept $C$ such
  that $\phi$ is logically equivalent to $(\ST_x(C))^{\dagger_v}$.
\end{cor}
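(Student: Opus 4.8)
The plan is to chain the characterization of $\SfFOL$ within $\mathcal{SL}$ due to Sturm and Wolter~\shortcite{SturmWolter01} with our Theorem~\ref{thm:rosen}. So let $\phi = \phi(x,v)$ be a $\bisim$-invariant $\mathcal{SL}$-formula with $x$ of sort $\domain$ and $v$ of sort $\world$, as in the statement. First I would observe that $\phi$ is invariant under potential $S5$-isomorphisms in the sense of Sturm and Wolter: since every potential $S5$-isomorphism between pointed $S5$-interpretations induces a bisimulation (this is exactly the observation quoted above), any two pointed $S5$-interpretations related by a potential $S5$-isomorphism are in particular bisimilar and hence, by $\bisim$-invariance of $\phi$, satisfy $\phi$ in exactly the same cases.

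Next I would invoke the theorem of Sturm and Wolter~\shortcite{SturmWolter01}, which states that, over unrestricted $S5$-interpretations, the formulas of $\mathcal{SL}$ that are invariant under potential $S5$-isomorphisms are precisely those of the form $\psi^{\dagger_v}$ for $\psi$ an $\SfFOL$-formula. Applied to $\phi$, in the case of a single $\domain$-variable, this yields an $\SfFOL$-formula $\psi = \psi(x)$ with the single free variable $x$ such that $\phi$ is logically equivalent to $\psi^{\dagger_v}$. Here one uses the routine fact — provable by a straightforward induction on $\psi$ — that $(-)^{\dagger_v}$ is semantics-preserving, in the sense that $\CI,w,\eta \models \psi$ iff $\CI$ under $\eta$ extended by $v \mapsto w$ satisfies $\psi^{\dagger_v}$, for every $S5$-interpretation $\CI$, world $w \in W^\CI$, and valuation $\eta$.

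From this semantic correspondence together with $\bisim$-invariance of $\phi = \psi^{\dagger_v}$, I would then read off that $\psi$ itself is $\bisim$-invariant: if $\CI,w,d \bisim \CJ,u,e$ (with $w \in W^\CI$ and $u \in W^\CJ$), then $\CI,w,d \models \psi$ iff $\CI$ under $x \mapsto d$, $v \mapsto w$ satisfies $\phi$, iff (by $\bisim$-invariance of $\phi$) $\CJ$ under $x \mapsto e$, $v \mapsto u$ satisfies $\phi$, iff $\CJ,u,e \models \psi$. Now $\psi = \psi(x)$ is a $\bisim$-invariant $\SfFOL$-formula, so Theorem~\ref{thm:rosen} (in its unrestricted-models version) provides an $\SfALCloc$-concept $C$ with $\psi \equiv \ST_x(C)$. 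Applying $(-)^{\dagger_v}$ and using its semantics-preservation once more, $\phi \equiv \psi^{\dagger_v} \equiv (\ST_x(C))^{\dagger_v}$, which is the desired conclusion.

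The bulk of the argument is thus a direct composition of two existing characterization theorems; the only points requiring care — and hence where I would expect the \emph{main obstacle}, such as it is — are the two interface lemmas: that our two-dimensional notion of bisimulation coincides with the one implicit in Sturm and Wolter's framework (so that ``potential $S5$-isomorphism implies bisimilar'' holds literally, not merely in spirit), and that the translation $(-)^{\dagger_v}$ faithfully transports $\bisim$-invariance back and forth between $\mathcal{SL}$-formulas $\phi(x,v)$ and $\SfFOL$-formulas $\psi(x)$. Note finally that, unlike Theorem~\ref{thm:rosen}, this corollary is stated only over unrestricted $S5$-interpretations, since the Sturm--Wolter characterization — whose proof rests on compactness — is not available over finite structures.
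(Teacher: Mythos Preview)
Your proposal is correct and follows essentially the same approach as the paper: the paper derives the corollary in one sentence by noting that every potential $S5$-isomorphism is a bisimulation, so $\bisim$-invariance of~$\phi$ yields invariance under potential $S5$-isomorphisms, whence Sturm--Wolter gives an equivalent $\SfFOL$-formula~$\psi$, and Theorem~\ref{thm:rosen} then supplies the $\SfALCloc$-concept~$C$. You have simply made explicit the interface steps (semantics-preservation of~$(-)^{\dagger_v}$ and transfer of $\bisim$-invariance from~$\phi$ to~$\psi$) that the paper leaves implicit, and correctly flagged the restriction to unrestricted models.
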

\noindent (Unlike for Theorem~\ref{thm:rosen}, there is as yet no
version of Corollary~\ref{cor:benthem} for finite
$S5$-interpretations, as the characterization of $\SfFOL$ within
$\mathcal{SL}$ is known only for the unrestricted case.)

\section{Conclusions}
We have proved a modal characterization theorem for the modal
description logic $\SfALCloc$, i.e.\ $\SfALC$~\cite{GabbayEA02} with
only local roles.  Specifically, we have shown that $\SfALCloc$, one
of the modal description logics originally introduced by Wolter and
Zakharyaschev~\shortcite{WolterZakharyaschev99}, is, both over finite
and over unrestricted models, the bisimulation-invariant fragment of
$S5$-modal first-order logic. By a result of Sturm and
Wolter~\shortcite{SturmWolter01}, it follows moreover that $\SfALCloc$
is, over unrestricted models, the bisimulation-invariant fragment of
two-sorted FOL with explicit worlds. To our knowledge, these are the
first modal characterization theorems in modal description logic.

It remains a topic of interest to obtain similar characterization
theorems for other modal description logics or many-dimensional modal
logics. Notably, this concerns logics whose modal dimension differs
from the comparatively simple structure of $S5$, e.g.\ $K_\ALC$. Also,
one may investigate the possibility of a modal characterization of
full $\SfALC$, then of course with respect to a different notion of
equivalence.

\bibliographystyle{mynamed}
\bibliography{coalgml}
\newpage
\appendix

\section{Details and Proofs}
\subsubsection{Details for Example~\ref{expl:modalized-roles}}

We show that the $\SfALC$-concept $\exists\Diamond r.\, A$ fails to be
invariant under bisimulation. Define an $S5$-interpretation $\CI$
by taking $\Delta^\CI=\{a,b\}$, $W^\CI=\{v_1,v_2\}$,
$r^{\CI,v_2}=\{(a,b)\}$, $r^{\CI,v_1}=\emptyset$, $A^{\CI,v_1}=\{b\}$,
and $A^{\CI,v_2}=\emptyset$. Moreover, define an
$S5$-interpretation $\CJ$ by $\Delta^\CJ=\{a,b\}$,
$W^\CJ=\{w_1,w_2,w_3\}$, $r^{\CJ,w_2}=\{(a,b)\}$,
$r^{\CJ,w_1}=r^{\CJ,w_3}=\emptyset$, $A^{\CJ,w_3}=\{b\}$, and
$A^{\CI,w_1}=A^{\CI,w_2}=\emptyset$. Then we have
$(v_1,a)\bisim(w_1,a)$, as 
\begin{align*}
  R = \{ & ((v_1,a),(w_1,a)),\\
  &((v_1,a),(w_3,a)),\\
  &((v_2,a),(w_2,a)),\\
  &((v_1,b),(w_3,b)),\\
  &((v_2,b),(w_1,b)),\\
  &((v_2,b),(w_2,b))\}
\end{align*}
is a bisimulation. But $a\in(\exists \Diamond r.\,A)^{\CI,v_1}$ while
$a\notin(\exists \Diamond r.\,A)^{\CJ,w_1}$.

\subsubsection{Proof of Lemma~\ref{lem:abisim}}

For item 1 and the `only if' direction of item 3, we note that in the
alternating game only the options for $S$ are restricted when
compared to the ordinary game, in the sense that he is forced to make
$\Delta$-moves at certain times. Also, the total number of pairs of
moves in the $n$-round alternating game is at most $2n+1$. Therefore
$D$ can use his winning strategy for the $\bisim_{2n+1}$ game to win
the $\abisim_n$ game.

For item 2 and the `if' part of item 3, if $D$ has a winning strategy
in the alternating game, then for every winning configuration
$((w,d),(v,e))$ that can occur before the first phase of a round there
must exist functions $f: W^\CI \rightarrow W^\CJ$ and
$g: W^\CJ \rightarrow W^\CI$ such that for any $W$-move to some
$w' \in W^\CI$ the answer by $D$ (according to the strategy) is
$f(w')$ and for any $W$-move to some $v' \in W^\CJ$ the answer by $D$
is $g(v')$. Now, as long as $S$ keeps making $W$-moves, $D$ can just
reply according to the functions $f$ and $g$. If $S$ does so
indefinitely (in the unbounded game), then $D$ wins. Otherwise, eventually
either $S$ makes a $\Delta$-move or the game ends. In the latter case,
$D$ wins immediately. In the former case, there are two subcases:
\begin{myitemize}
\item There were no $W$-moves played. Then $D$ is in the same
situation that arises in the alternating game when $S$ decides against
moving in the first phase of a round. $D$ plays his winning reply for
that situation.
\item There was at least one pair of $W$-moves, w.l.o.g.\ the last one
was $S$ picking $w' \in W^\CI$ and $D$ replying with $f(w')$. This is
the same configuration that arises in the alternating game when $S$
plays  $w'$ during the first phase, by definition of $f$. So $D$ has
a winning reply for $S$'s $\Delta$-move.
\end{myitemize}
For the finite case it should be noted that every configuration that
can be reached in $n$ rounds of the ordinary game according to this
strategy can also be reached in $n$ rounds of the alternating game
(while following the winning strategy). \hfill\qed

\subsubsection{Proof of Lemma~\ref{lem:efequiv}}

The proof will proceed by induction on the structure of $\phi$.
However, we first need to generalize some notions for the purpose of
the proof:

First, we generalize the Ehrenfeucht-Fra\"iss\'e game to allow for more
possible starting configurations:
\begin{defn}
  Let $\CI$, $\CJ$ be $S5$-interpretations, let $w \in W^\CI$, $v \in
  W^\CJ$, let $\bar d$ and $\bar e$ be finite sequences of equal
  length over $\Delta^\CI$ and $\Delta^\CJ$, respectively, and let $n
  \ge 0$.  The \emph{$n$-round Ehrenfeucht-Fra\"iss\'e game for
  $\CI,w,\bar d$ and $\CJ,v,\bar e$} is played with the same rules as
  in Definition~\ref{defn:ef}, but the starting configuration is now
  $((w,\bar d),(v,\bar e))$. We also write $\CI,w,\bar d \ef_n
  \CJ,v,\bar e$ when $D$ has a winning strategy for this game.
\end{defn}

We can now also generalize the notion of $\ef_n$-invariance to
formulas that may have more than one free variable:
\begin{defn}
  Let $\phi$ be an $\SfFOL$-formula with free variables contained in
  $\{x_1,\dots,x_k\}$. $\phi$ is \emph{$\ef_n$-invariant}, if for all
  $S5$-interpretations $\CI$, $\CJ$ and all $w \in W^\CI$, $v \in
  W^\CJ$, $\bar d \in (\Delta^\CI)^k$ and $\bar e \in (\Delta^\CJ)^k$
  such that $\CI,w,\bar d \ef_n \CJ,v,\bar e$,
  \begin{equation*}
    \CI,w,\bar d \models \phi \Leftrightarrow \CJ,v,\bar e \models
    \phi.
  \end{equation*}
\end{defn}

\noindent We are now set to prove the following more general version
of Lemma~\ref{lem:efequiv}. We will denote the rank of a formula
$\phi$ by $\rk(\phi)$.

\begin{lem}
Every $\SfFOL$-formula $\phi$ of rank at most $n$ with free variables
contained in $\{x_1,\dots,x_k\}$ is $\ef_n$-invariant.
\end{lem}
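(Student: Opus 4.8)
The plan is to prove the statement by induction on the structure of $\phi$, which is the standard approach to Ehrenfeucht-Fra\"iss\'e invariance results. Throughout, I fix $S5$-interpretations $\CI,\CJ$, worlds $w\in W^\CI$, $v\in W^\CJ$, and tuples $\bar d\in(\Delta^\CI)^k$, $\bar e\in(\Delta^\CJ)^k$ with $\CI,w,\bar d\ef_n\CJ,v,\bar e$, and show $\CI,w,\bar d\models\phi\iff\CJ,v,\bar e\models\phi$ whenever $\rk(\phi)\le n$. The base cases are atomic formulas $R(x_{i_1},\dots,x_{i_m})$ and $x_i=x_j$; here $\rk(\phi)=0$, so no moves are played, and the winning condition for the $0$-round game is exactly that the initial configuration $((w,\bar d),(v,\bar e))$ is a partial isomorphism, which immediately gives the equivalence by the two clauses in the definition of partial isomorphism. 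The Boolean cases $\neg\psi$ and $\psi\land\chi$ are routine: $\rk(\psi),\rk(\chi)\le\rk(\phi)\le n$, so the induction hypothesis applies directly with the same game witness.

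The two interesting cases are $\phi=\exists x_{k+1}.\,\psi$ and $\phi=\Box\psi$, where $\rk(\psi)=\rk(\phi)-1\le n-1$, so I get to invoke the induction hypothesis only for the $(n-1)$-round game. For the quantifier case, suppose $\CI,w,\bar d\models\exists x_{k+1}.\,\psi$, witnessed by some $d_{k+1}\in\Delta^\CI$, so $\CI,w,\bar d d_{k+1}\models\psi$. I use $D$'s winning strategy for the $n$-round game: let $S$ play the individual move picking $d_{k+1}$ in $\CI$; $D$ responds with some $e_{k+1}\in\Delta^\CJ$, and by assumption the resulting strategy is still winning in the remaining $(n-1)$-round game, i.e.\ $\CI,w,\bar d d_{k+1}\ef_{n-1}\CJ,v,\bar e e_{k+1}$. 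Since $\rk(\psi)\le n-1$, the induction hypothesis (in the generalized form with $k+1$ free variables) yields $\CJ,v,\bar e e_{k+1}\models\psi$, hence $\CJ,v,\bar e\models\exists x_{k+1}.\,\psi$; the converse direction is symmetric, using the mirror move in $\CJ$. The modal case is analogous but simpler: suppose $\CI,w,\bar d\models\Box\psi$, and pick any $v'\in W^\CJ$; we must show $\CJ,v',\bar e\models\psi$. Let $S$ play the $W$-move to $v'$ in $\CJ$; $D$'s winning strategy gives a response $w'\in W^\CI$ with $\CI,w',\bar d\ef_{n-1}\CJ,v',\bar e$, and since $\CI,w',\bar d\models\psi$ (as $\CI,w,\bar d\models\Box\psi$), the induction hypothesis gives $\CJ,v',\bar e\models\psi$. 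As $v'$ was arbitrary, $\CJ,v,\bar e\models\Box\psi$; the converse is symmetric.

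The only genuine subtlety — and the step I would be most careful about — is making sure the generalized induction statement is set up correctly before starting: the formula's free variables are only required to be \emph{contained in} $\{x_1,\dots,x_k\}$, so when handling $\exists x_{k+1}.\,\psi$ one must pass to the $(k+1)$-variable version of $\ef_{n-1}$-invariance, and the game configurations must be allowed to carry sequences $\bar d,\bar e$ of arbitrary (equal) length — which is exactly why the generalized game from the preceding definition is needed rather than the one-free-variable game of Definition~\ref{defn:ef}. I would also note the bookkeeping point that after $S$'s move the remaining game has $n-1$ rounds, matching the drop in rank, which is what licenses the use of the induction hypothesis at level $n-1$; this is where the \emph{bounded} nature of the statement is used, and it is entirely routine once the round-counting is made explicit. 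No compactness or model-theoretic machinery is needed; the entire argument is a direct structural induction threading $D$'s strategy through each connective.
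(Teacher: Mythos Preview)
Your proposal is correct and follows essentially the same approach as the paper's proof: both proceed by structural induction on~$\phi$, handle atomic formulas via the partial-isomorphism condition on the initial configuration, treat the Boolean cases routinely, and in the $\exists$ and $\Box$ cases use Spoiler's first move together with Duplicator's winning response to obtain an $\ef_{n-1}$ relation to which the induction hypothesis applies. Your explicit remark that the generalized game with arbitrary-length tuples is needed for the quantifier step mirrors exactly the paper's preliminary generalization of Definition~\ref{defn:ef}.
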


\begin{proof}
  The proof will be by induction on the structure of $\phi$.
  \begin{myitemize}
  \item For the base cases where $\phi$ is of the form $y_1=y_2$ or
  $R(y_1,\dots,y_m)$ the $\ef_n$-invariance follows from the fact that
  the starting configuration is a partial isomorphism.
  \item The Boolean cases ($\phi = \psi \sqcap \chi$ or $\phi = \neg
  \psi$) are straightforward.
  \item Suppose $\phi = \square \psi$. Since $\rk(\phi) \le n$, we get
  that $\rk(\psi) \le n-1$ and therefore $\psi$ is
  $\ef_{n-1}$-invariant by the induction hypothesis.
  Let $\CI,w,\bar d \ef_n \CJ,v,\bar e$ and $\CI,w,\bar d \models
  \square \psi$. We then need to show that $\CJ,v,\bar e \models
  \square \psi$, so let $v' \in W^\CJ$ and we need to show that $\CJ,v',\bar e
  \models \psi$. By assumption, $D$ has a winning response if $S$
  plays the $W$-move $v'$, let this response be $w'$. Then
  $\CI,w',\bar d \ef_{n-1} \CJ,v',\bar e$, because $D$ has a winning
  strategy for the remaining $n-1$ rounds. Since $\CI,w,\bar d \models
  \square \psi$, we get $\CI,w',\bar d \models \psi$, and
  $\ef_{n-1}$-invariance of $\psi$ then yields $\CJ,v',\bar e \models
  \psi$, as desired.
  \item Suppose $\phi = \exists x_{k+1}.\psi$ (w.l.o.g. we can
  substitute the variable bound by the quantifier). Since $\rk(\phi)
  \le n$, we get that $\rk(\psi) \le n-1$ and therefore $\psi$ is
  $\ef_{n-1}$-invariant by the induction hypothesis.  Let $\CI,w,\bar
  d \ef_n \CJ,v,\bar e$ and $\CI,w,\bar d \models \exists x_{k+1}. \psi$. We
  then need to show that $\CJ,v,\bar e \models \exists x_{k+1}.\psi$.
  $\CI,w,\bar d \models \exists x_{k+1}.\psi$, so by definition there
  must exist some $d \in \Delta^\CI$ such that $\CI,w,\bar d d \models
  \psi$. By assumption, $D$ has a winning response if $S$ plays the
  $\Delta$-move $d$, let this response be $e$. Then $\CI,w,\bar d d
  \ef_{n-1} \CJ,v,\bar e e$, because $D$ has a winning strategy for the
  remaining $n-1$ rounds. Because of this, it follows that $\CJ,v,\bar
  e e \models \psi$ and thus also $\CJ,v,\bar e \models \phi$, as
  desired.
  \end{myitemize}
\end{proof}

\subsubsection{Proof Details for Lemma~\ref{lem:mc1}}

We first note that $\CI,w,d \bisim \CI',w,d$ and $\CJ,v,e \bisim
\CJ',v,e$, where in both cases the bisimulation is given by the
embedding into the disjoint union (and identity on the set of
worlds). So at the end of the proof we can combine $\bisim$-invariance
of $\phi$ with $\CI',w,d \ef_n \CJ',v,e$ to prove:
\begin{equation*}
\CI,w,d \models \phi \Leftrightarrow
\CI',w,d \models \phi \Leftrightarrow
\CJ',v,e \models \phi \Leftrightarrow
\CJ,v,e \models \phi
\end{equation*}

\noindent Now we recall the invariant that $D$ needs to maintain:
  
\begin{quote}
  If $((w,\bar d),(v,\bar e))$ is the current configuration, and
  $\bar d = (d_0,\dots,d_i), \bar e = (e_0,\dots,e_i)$, then $w=v$
  and there is an isomorphism between $\CI'|_{U^{\ell_i}(\bar d)}$
  and $\CJ'|_{U^{\ell_i}(\bar e)}$ mapping each $d_j$ to $e_j$.
\end{quote}
\noindent First, the invariant clearly holds at the beginning of the
game, as the starting configuration is $((w_0,d_0),(w_0,d_0))$, and
since $\ell_0 = \ell$, both interpretations from the invariant are
isomorphic to $\CJ$ and that isomorphism maps $d_0$ to itself.

Second, whenever the invariant holds after at most $n$ rounds, the
current configuration is a partial isomorphism as defined in
Definition~\ref{defn:ef}, i.e.\ actually ensures that~$D$ wins. Using
the names from the invariant, this follows directly from the fact that
the isomorphism maps every $d_j$ to the corresponding $e_j$, where for
the second item in the definition of partial isomorphism we note that
$w=v$.

Finally, we show that the invariant is actually invariant with respect
to the strategy described in the proof sketch. For the case of a
$W$-move, this is clear. In the following, we treat the case of a
$\Delta$-move, with notation as in the invariant. There are two cases:

First, suppose that $S$ picks $d \in U^{2\ell_{i+1}}(\bar d)$ and $e$
is $D$'s response according to the isomorphism. Note that, using the
triangle inequality for the Gaifman distance,
$U^{\ell_{i+1}}(d) \subseteq U^{\ell_i}(\bar d)$ (since
$2\ell_{i+1}+\ell_{i+1}=3\ell_{i+1}=3\cdot3^{n-i-1}=3^{n-i}=\ell_i$)
and thus also $U^{\ell_{i+1}}(e) \subseteq U^{\ell_i}(\bar e)$ by
isomorphism. This implies that the domain $U^{\ell_{i+1}}(\bar d d)$
and range $U^{\ell_{i+1}}(\bar e e)$ of the putative new isomorphism
are contained in those of the old one.  Therefore, the new isomorphism
can be taken to be the restriction of the old isomorphism to the new
domain and range. The same argument works if $S$ picks some
$e \in U^{2\ell_{i+1}}(\bar e)$ instead.

Otherwise, $S$ picks a $d$ such that $D(d_j,d) > 2\ell_{i+1}$ for all
$0 \le j \le i$. Then,
$U^{\ell_{i+1}}(\bar d) \cap U^{\ell_{i+1}}(d) = \emptyset$, again by
the triangle inequality. Now $D$ picks $e$ in $\CJ'$ from a fresh copy
(which means that it contains none of the $e_j$ ($0 \le j \le i)$) of
the same type ($\CI$ or $\CJ$) that $d$ lies in. Such a copy always
exists, because $\CJ'$ contains $n$ copies of both types and in each
of the $n$ rounds at most one of them is visited. Now we obtain two
isomorphisms of $S5$-interpretations. The
radius-$\ell_{i+1}$-neighbourhoods of $\bar d$ and $\bar e$ are
isomorphic by restriction of the old isomorphism, as in the first
case.  The radius-$\ell_{i+1}$-neighbourhoods of $d$ and $e$ are
isomorphic because $d$ and $e$ are the same element in isomorphic
copies of the same type ($\CI$ or $\CJ$). Now, since the domains and
ranges of the two isomorphisms are disjoint, we can combine them into
a new isomorphism which satisfies the constraints from the
invariant. Again, the same argument applies for the case where $S$
picks an element $e$ in $\CJ'$ instead. \hfill\qed

\subsubsection{Proof Details for Lemma~\ref{lem:mc2}}

We first show the following lemma:

\begin{lem}
  Let $\CI$ be an $S5$-interpretation, $(w,d) \in
  W^\CI\times\Delta^\CI$. Then $\CI,w,d \abisim_\ell
  \CI|_{U^\ell(d)},w,d$.
\end{lem}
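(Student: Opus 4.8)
The plan is to exhibit an explicit winning strategy for Duplicator in the $\ell$-round alternating bisimulation game for $\CI,w,d$ versus $\CI|_{U^\ell(d)},w,d$. Note first that $U^\ell(d)$ — being a Gaifman-ball of radius $\ell$ around $d$ — is nonempty (it contains $d$), so $\CI|_{U^\ell(d)}$ is a legitimate $S5$-interpretation. The guiding idea is that after $k$ rounds the current individuals will be equal elements of $\Delta^\CI$, and the one on the $\CI|_{U^\ell(d)}$ side will still lie within Gaifman distance $\ell - k$ of $d$, so that there is "enough room left" to answer the next $\Delta$-move inside the restricted interpretation.

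Concretely, I would prove by induction on $k \le \ell$ that $D$ can maintain the following invariant: if after $k$ rounds the configuration is $((w', d'),(v', e'))$ (with the first component in $\CI$ and the second in $\CI|_{U^\ell(d)}$), then $w' = v'$, $d' = e'$, and $D(d,d') \le \ell - k$ in the Gaifman graph of $\CI$. The invariant holds initially since $D(d,d) = 0 \le \ell$. For the $W$-phase of a round, Spoiler picks a world in one interpretation and $D$ copies it in the other; this is always possible because $\CI$ and $\CI|_{U^\ell(d)}$ share the same set of worlds, and it preserves $w' = v'$, $d' = e'$, and the distance bound. For the $\Delta$-phase: if Spoiler moves in $\CI|_{U^\ell(d)}$ along some role to an element $d''$, then $(d', d'') \in r^{\CI,w'}$ holds in $\CI$ as well (restriction only removes edges), so $D$ plays $d''$ in $\CI$; since Gaifman-adjacent or equal, $D(d,d'') \le D(d,d') + 1 \le \ell - k + 1 = \ell - (k-1)$, but actually after this round the count is $k$ rounds done among at most $\ell$, and one checks $D(d,d'') \le \ell - k$ still holds when $k \le \ell$ because $d'' \in U^\ell(d)$ forces $D(d,d'') \le \ell$ and the finer bound follows from $D(d,d') \le \ell-(k-1)$ — here I would just carry the bound $D(d,d') \le \ell-k$ after $k$ rounds and verify the triangle-inequality step gives $\le \ell-(k+1)+1$. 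If instead Spoiler moves in $\CI$ along $r$ to some $d''$ with $(d',d'') \in r^{\CI,w'}$, then $d$ and $d''$ are Gaifman-adjacent to $d'$, so $D(d,d'') \le D(d,d') + 1 \le \ell-k+1 \le \ell$ (using $k \ge 1$ after at least one round is completed), hence $d'' \in U^\ell(d)$ and the edge $(d',d'')$ survives in the restriction, so $D$ may legally answer with $d''$. In both cases the atoms agree because $d' = e'$, $w' = v'$, and concept-name membership in the restriction is inherited from $\CI$; so no winning configuration for $S$ is ever reached. Finally, the $n$-round alternating game permits one extra $W$-phase at the end, which is handled exactly as the $W$-phase above.

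The main obstacle — and it is really just bookkeeping rather than a genuine difficulty — is getting the distance accounting in the $\Delta$-phase exactly right so that the element Spoiler forces actually lies in $U^\ell(d)$ and the corresponding role edge is retained after restriction; this is where the $S5$-induced choice to measure Gaifman distance purely in the individual dimension pays off, since $W$-moves cost nothing and only $\Delta$-moves decrement the "budget." Everything else (preservation of $w' = v'$, $d' = e'$, and of atomic agreement) is immediate from the definition of $\CI|_{U^\ell(d)}$.
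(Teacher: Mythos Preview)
Your overall strategy --- have Duplicator copy every move of Spoiler --- is exactly what the paper does, and it is correct. The flaw is in your invariant: you claim $D(d,d') \le \ell - k$ after $k$ rounds, but this is backwards. Each $\Delta$-move can only \emph{increase} the Gaifman distance from $d$ by at most~$1$, so the invariant that actually holds (and that the paper uses) is $D(d,d') \le k$. Your version is not inductive: from $D(d,d') \le \ell-(k-1)$ and one further role-step you obtain only $D(d,d'') \le \ell-(k-1)+1 = \ell-k+2$, not $\ell-k$; this is precisely the point where your write-up becomes tangled. Worse, at $k=\ell$ your invariant would force $d'=d$, which is plainly false.

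With the correct invariant the bookkeeping is clean: entering round $k$ one has $D(d,d') \le k-1$, so for any role-successor $d''$ of $d'$ in $\CI$ one gets $D(d,d'') \le k \le \ell$, whence $d'' \in U^\ell(d)$ and the edge $(d',d'')$ is retained in $\CI|_{U^\ell(d)}$; thus Duplicator may legally copy the move. Your treatment of $W$-moves (including the trailing one), of moves originating in the restricted interpretation, and of atomic agreement is fine as written.
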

\begin{proof}
  The winning strategy for $D$ is to copy every move by $S$. Clearly,
  no winning configuration for $S$ can be reached in this way, so we
  just need to show that this is a valid strategy, i.e. that copying
  $S$'s move is always a legal move. But this easily follows from the
  fact that after $k$ rounds of the game, if the current configuration
  is $((w',d'),(w',d'))$ then $D(d,d') \le k$. Note that any $W$-moves
  $S$ elects to play (including the one after the $\ell$-th round) do
  not affect the Gaifman distance.
\end{proof}

Now let $\CI,w,d$ and $\CJ,v,e$ be as in Lemma~\ref{lem:mc2}, so 
$\CI,w,d \abisim_\ell \CJ,v,e$. Because every $S5$-interpretation is
bisimilar to its tree unravelling, and bisimilarity implies
alternating bisimilarity up to depth $\ell$ by Lemma~\ref{lem:abisim},
$\CI,w,d \abisim_\ell \CI^\ast_d,w,d$ and $\CJ,v,e \abisim_\ell
\CJ^\ast_e,v,e$. By transitivity of $\abisim_\ell$, $\CI^\ast_d,w,d
\abisim_\ell \CJ^\ast_e,v,e$. 

Using the above lemma and again transitivity of $\abisim_\ell$, we
obtain $\CI^\ast_d|_{U^\ell(d)},w,d \abisim_\ell
\CJ^\ast_e|_{U^\ell(e)},v,e$.



Now we show that the winning strategy for $D$ in the $\abisim_\ell$
game between $\CI^\ast_d|_{U^\ell(d)}$ and $\CJ^\ast_e|_{U^\ell(e)}$
is also winning for the $\abisim$ game. The tree structure of the
interpretations guarantees that, regardless of strategy, after round
$k$, if the current state is $((w',d'),(v',e'))$, then $d'$ and $e'$
are at distance $k$ from the root. So, if $D$ follows his winning
strategy, either $S$ loses within $\ell$ rounds or at least $\ell$
rounds are played. Going into round $\ell+1$, there are two cases for
phase 1 of this round:
\begin{myitemize}
\item $S$ chooses to make a $W$-move. In this case, there exists a
winning reply for $D$, remembering that the $\abisim_\ell$ game allows
for a last pair of $W$-moves to be played after round $\ell$, so this
case is covered by the existing strategy.
\item $S$ does not choose to make a $W$-move. In this case, we go
straight to phase 2.
\end{myitemize}
Now, $S$ is forced to make a $\Delta$-move, but both individuals in the
current configuration are at distance $\ell$ from their respective
roots $d$ and $e$, so they do not have any $r$-successors for any role
$r$. Therefore $S$ cannot make a legal move, and $D$ wins the game.

By item 3 of Lemma~\ref{lem:abisim}, $\CI^\ast_d|_{U^\ell(d)},w,d \bisim
\CJ^\ast_e|_{U^\ell(e)},v,e$, so to finish the proof, we combine the
$\bisim$-invariance and $\ell$-locality of
$\phi$ as follows:
\begin{multline*}
\CI,w,d \models \phi \Leftrightarrow
\CI^\ast_d,w,d \models \phi \Leftrightarrow
\CI^\ast_d|_{U^{\ell}(d)},w,d \models \phi \Leftrightarrow \\
\CJ^\ast_e|_{U^{\ell}(e)},v,e \models \phi \Leftrightarrow
\CJ^\ast_e,v,e \models \phi \Leftrightarrow
\CJ,v,e \models \phi
\end{multline*}
\hfill\qed

\subsubsection{Details for Remark~\ref{rem:partunravel}}

Let $\CI$ be an $S5$-interpretation and $(w_0,d_0) \in
W^\CI\times\Delta^\CI$. Let $\CJ,w_0,d_0$ be the partial unravelling of
$\CI$ up to level $\ell$. Then we can define a map $\rho: \Delta^\CJ
\rightarrow \Delta^\CI$ as follows: every element from a copy of
$\CI$ is mapped to itself and every path $\bar d$ from the tree
unravelling is mapped to its last element $\pi(\bar d)$ (note that
this is well-defined, because any elements that were identified to
form the partial unravelling have the same image under this map).

A bisimulation is then given by $\rho$ in the individual dimension and
identity in the world dimension, i.e.
\begin{equation*}
(w,d) R (v,e) \Leftrightarrow w = v \text{ and } \rho(e) = d
\end{equation*}

\subsubsection{Proofs of the Properties of Normal Forms}

We prove the following two properties of normal forms:

\begin{myitemize}
\item For any $\CI,w,d$, there is exactly one normal form
  $C^k_{\CI,w,d}$ of rank $k$ such that $\CI,w,d \models
  \nf{k}{\CI,w,d}$.
\item We have $\CI,w,d \bisim_k \CJ,v,e$ iff
  $\nf{k}{\CI,w,d} = \nf{k}{\CJ,v,e}$.
\end{myitemize}

\noindent In what follows, we will sometimes refer to the
$\varepsilon_B$ from Definition~\ref{defn:nf} as \emph{signs} where
`nothing' is the positive sign and negation the negative sign.

For the first property: For every $B \in \mathsf{at}_k$, either
$\CI,w,d \models B$ or $\CI,w,d \models \neg B$, and we put
$\varepsilon_B$ to be nothing in the first case and negation in the
second. Together, this gives a normal form $C^k_{\CI,w,d}$. For
uniqueness, we note that if we defined any of the $\varepsilon_B$
differently, $\CI,w,d$ would fail to satisfy the resulting normal
form.

For the second property: For the `only if' direction, we note that if
$\CI,w,d \bisim_k \CJ,v,e$, then $\CJ,v,e \models C^k_{\CI,w,d}$ by
$k$-bisimilarity (and the fact that $C^k_{\CI,w,d}$ is of rank $k$),
but then $C^k_{\CI,w,d} = C^k_{\CJ,v,e}$ by uniqueness of normal forms
(the first property).  For the `if' direction, we proceed by
induction on $k$. So suppose $C^k_{\CI,w,d} = C^k_{\CJ,v,e} =: C$.
First we ensure that the configuration $((w,d),(v,e))$ is not winning
for $S$: for every atomic concept $A$, the sign of $A$ in $C$
determines for both sides whether they satisfy $A$ or not, so
$\CI,w,d$ and $\CJ,v,e$ satisfy the same atomic concepts.  If $k=0$,
we are done because the game ends immediately. So suppose $k>0$, and
we now need to give a winning response for $D$ for all possible moves
by $S$:
\begin{myenumerate}
\item Suppose $S$ picks some $w' \in W^\CI$. Then $\CI,w,d \models
\Diamond C^{k-1}_{\CI,w',d} =: B$, so the sign $\varepsilon_B$ in $C$
is positive and so also $\CJ,v,e \models
\Diamond C^{k-1}_{\CI,w',d}$. By definition, there must exist
some $v' \in W^\CJ$ such that $\CJ,v',e \models C^{k-1}_{\CI,w',d}$.
By uniqueness of normal forms, $C^{k-1}_{\CI,w',d} =
C^{k-1}_{\CJ,v',e}$, and by the induction hypothesis, $\CI,w',d
\bisim_{k-1} \CJ,v',e$. This means that $D$ has a winning strategy for
the remaining $k-1$ rounds, and thus wins the bisimulation game.

\item The case where $S$ picks some $v' \in W^\CJ$ is
  analogous. 

\item Suppose $S$ picks the role $r\in\RN$ and $d'\in\Delta^\CI$
such that $(d,d')\in r^{\CI,w}$. Then $\CI,w,d \models
\exists r.C^{k-1}_{\CI,w,d'} =: B$, so the sign $\varepsilon_B$ in $C$
is positive and so also $\CJ,v,e \models
\exists r.C^{k-1}_{\CI,w,d'}$. By definition, there must exist
some $e' \in \Delta^\CJ$ such that $\CJ,v,e' \models C^{k-1}_{\CI,w,d'}$.
By uniqueness of normal forms, $C^{k-1}_{\CI,w,d'} =
C^{k-1}_{\CJ,v,e'}$, and by the induction hypothesis, $\CI,w,d'
\bisim_{k-1} \CJ,v,e'$. This means that $D$ has a winning strategy for
the remaining $k-1$ rounds, and thus wins the bisimulation game.

\item The case where $S$ picks a role $r\in\RN$ and $e'\in\Delta^\CJ$
  such that $(e,e')\in r^{\CJ,v}$ is
  analogous. 
  \hfill\qed
\end{myenumerate}
\end{document}